\newtheorem{theorem}{\bf Theorem}
\newtheorem{proposition}{\bf Proposition}
\newtheorem{corollary}{\bf Corollary}
\newtheorem{remark}{\bf Remark}
\newtheorem{lemma}{\bf Lemma}
\newtheorem{definition}{\bf Definition}
\renewcommand*{\@opargbegintheorem}[3]{\trivlist
  \item[\hskip \labelsep{\bfseries #1\ #2}] \textbf{(#3)}\ \itshape}
\let\chap\S
\newcommand{\Y}{\mathbb{Y}}
\newcommand{\X}{\mathbb{X}}
\newcommand{\Z}{\mathbb{Z}}
\newcommand{\G}{\mathcal{G}}
\newcommand{\U}{\mathcal{U}}
\newcommand{\B}{\mathcal{B}}
\newcommand{\F}{\mathcal{F}}
\renewcommand{\S}{\mathcal{S}}
\renewcommand{\H}{\mathcal{H}}
\renewcommand{\L}{\mathcal{L}}
\renewcommand{\P}{\mathcal{P}}
\begin{document}

\title{Measuring Quantum Information Leakage Under Detection Threat} 

\author{ 
Farhad Farokhi and Sejeong Kim
\thanks{
The authors are with the Department of Electrical and Electronic Engineering at the University of Melbourne. 
}
\thanks{e-mails:\{farhad.farokhi, sejeong.kim\}@unimelb.edu.au}
\vspace{-5mm}
}

\maketitle

\thispagestyle{plain}
\pagestyle{plain}

\begin{abstract} Gentle quantum leakage is proposed as a measure of information leakage to arbitrary eavesdroppers that aim to avoid detection. Gentle (also sometimes referred to as weak or non-demolition) measurements are used to encode the desire of the eavesdropper to evade detection. The gentle quantum leakage meets important axioms proposed for measures of information leakage including positivity, independence, and unitary invariance. Global depolarizing noise, an important family of physical noise in quantum devices, is shown to reduce gentle quantum leakage (and hence can be used as a mechanism to ensure privacy or security). A lower bound for the gentle quantum leakage based on asymmetric approximate cloning is presented. This lower bound relates information leakage to mutual incompatibility of quantum states. A numerical example, based on the encoding in the celebrated BB84 quantum key distribution algorithm, is used to demonstrate the results.
\end{abstract}

\section{Introduction}
Quantum cryptography and key distribution, such as the celebrated BB84 algorithm~\cite{bennett2014quantum}, often rely on two fundamental features of quantum mechanical systems: no cloning theorem and post-measurement state collapse~\cite{nielsen2010quantum}. No cloning theorem states that it is impossible to copy a generic quantum state and therefore an eavesdropper must take direct measurements of the underlying quantum mechanical system that is used for communication (e.g., polarization of photons). Post-measurement state collapse implies that upon observing the quantum mechanical system by the eavesdropper, its state will stochastically and irreparably change, which can be used by the sender or receiver to identify presence of an intruder~\cite{bennett2014quantum,ekert1994eavesdropping}. Upon detection of the eavesdropper, the authorized users can abandon communication or switch medium to avoid the eavesdropper. This motivates investigating the interplay between information extraction by the eavesdropper (by taking informative measurements) versus its detection by the authorized users (by post-measurement state collapse) to determine the optimal trade-off. 

The problem of investigating the trade-off between measurement informativeness and state collapse is not new. It has attracted attention of physicist in the past~\cite{fuchs1996quantum,fuchs1998information, fuchs2001information,banaszek2006information}. However, they may not be fit for purpose due to two main reasons.  First, they assume that the eavesdropper is interested in estimating the entirety of the classical information that is encoded in the quantum system. This restrict our modeling of the eavesdropper. A better approach to investigating eavesdroppers in security and privacy literature is to consider the worst-case information leakage over anything that they try to guess or estimate~\cite{issa2019operational,Farokhi_PRA,liao2019tunable, farokhi2021measuring}. This way, we do not underestimate the eavesdropper and consider a more general threat model. Second, mutual information is not a good measure for information leakage~\cite{issa2019operational}. It was stated in~\cite{fuchs1996quantum} that ``mutual information ... may not be the quantities that are most relevant to applications in quantum cryptography.'' The motivation for mutual information is rooted in data compression and transmission with vanishing error, while an eavesdropper may be content with guessing most likely outcomes, e.g., as an starting point for phishing attacks. Thus even a modest increase in the probability of successfully guessing some attributes of the encoded data can lead to devastating consequences.

In this paper, we develop a notion for information leakage against an arbitrary eavesdropper (one whose intention is not entirely clear to us) that aims to avoid detection. We use gentle measurements, utilized in~\cite{aaronson2019gentle}, to encode the desire of the eavesdropper to evade detection. A measurement is called gentle if the post-measurement state  remains in proximity of the state prior to measurement with high probability (over outcomes of measurement). The measure of proximity, i.e., the magnitude of the change caused by the measurement, in the gentle measurement framework can be tied to the probability of detection by the Bayesian quantum hypothesis testing~\cite{fuchs1998information}. This notion of quantum information leakage, referred to as \textit{gentle quantum leakage}, measures the worst-case multiplicative increase in the probability of correctly guessing any deterministic or randomized function of the private classical data with and without access to the quantum system encoding the-said data. This way, we search over all possible goals for the eavesdropper and do not restrict our analysis. 

We derive a semi-explicit formula for the gentle quantum measurement based on the Sibson information of order infinity (an extension of the mutual information~\cite{7308959}). Furthermore, we prove that the gentle quantum leakage meets important axioms for a measure of information leakage including positivity (i.e., gentle quantum leakage is always greater than or equal to zero), independence (i.e., gentle quantum leakage is zero if the quantum state is independent of the classical data), and unitary invariance (gentle quantum leakage remains unchanged by transforming the quantum encoding of the classical data by a unitary channel). These axioms have been formulated for classical and quantum notions of information leakage~\cite{Farokhi_PRA,issa2019operational, farokhi2021measuring,muller2013quantum,10106314906367}. We provide upper bounds for the gentle quantum leakage based on the number of the possibilities of the classical data (i.e., the cardinality of its support set) and the dimension of the quantum system used for encoding the data. We prove that global depolarizing noise, an important family of physical noise in quantum devices~\cite{PhysRevE104035309}, reduces the gentle quantum leakage and therefore, can be used as a mechanism for mitigating security and privacy threats, albeit at the risk of sacrificing  utility~\cite{farokhi2024barycentric,nuradha2023quantum}. We also provide a lower bound for gentle quantum leakage based on asymmetric approximate cloning~\cite{mitra2021optimal}. 
A numerical example, based on the encoding in the BB84~\cite{bennett2014quantum}, is used to demonstrate the results.

The remainder of the paper is organized as follows. Section~\ref{sec:preliminary} presents some preliminary material and notations on quantum systems and information theory. Gentle quantum leakage is presented in Section~\ref{sec:gentle_leakage}. A lower bound based on approximate cloning is presented in Section~\ref{sec:cloning}. Numerical results are presented in Section~\ref{sec:numerical}. Finally, Section~\ref{sec:conc} concludes the paper and presents some directions for future work.

\section{Preliminaries} \label{sec:preliminary}
Finite-dimensional Hilbert spaces are denoted by $\H$. The set of linear operators from $\H_A$ to $\H_B$ is denoted by $\L(\H_A,\H_B)$. When $\H_A=\H_B=\H$, with slight abuse of notation, we write $\L(\H)$ instead of $\L(\H_A,\H_B)$. The set of positive semi-definite linear operators on Hilbert space $\H$ is denoted by $\P(\H)\subset\L(\H)$. We write $A\geq 0$ when $A\in\P(\H)$ and $A\geq B$ if $A-B\geq 0$. Furthermore, the set of positive semi-definite linear operators on Hilbert space $\H$ with unit trace, also known as density operators, is denoted by $\S(\H)\subset\P(\H)\subset\L(\H)$. We use lower case Greek letters, such as $\rho$ and $\sigma$, to denote density operators. Conventionally, density operators are used to model states of quantum systems~\cite[\chap\,4]{wilde2013quantum}. We define the (normalized) trace distance between any two quantum states $\rho,\sigma\in\S(\H)$ as
\begin{align*}
    \|\rho-\sigma \|_{\trace}:=\frac{1}{2}\trace(|\rho-\sigma|),
\end{align*}
where $|M|=\sqrt{M^\dag M}$ for any operator $M\in\L(\H)$. An important property of the trace distance is that it is unitary invariant, i.e., $\|U(\rho-\sigma )V^\dag\|_{\trace}=\|\rho-\sigma \|_{\trace}$ for all unitary operators $U$ and $V$~\cite[Property~9.1.4]{wilde2013quantum}. Note that operator $U\in\L(\H)$ is unitary if $U^\dag U=U U^\dag=I$. For any two linear operators $A,B\in\L(\H)$, $[A,B]=AB-BA$ denotes their commuter. Linear operators $A,B\in\L(\H)$ commute if and only if $[A,B]=0$. 

Generalized quantum measurements are modelled using the Positive Operator Valued Measure (POVM) framework. POVMs can model the von Neumann, i.e., projection-based, measurements and their extensions, e.g., when the measurement involves interaction with ancillary systems~\cite{wiseman2010quantum}. A POVM, with the set of possible outcomes $\Y$, is a set of positive semi-definite operators $\F=\{F_y\}_{y\in\Y}\subseteq \P(\H)$ such that $\sum_{y\in\Y}F_y=I$. The probability of observing measurement outcome $y\in\Y$ on a quantum system with state $\rho\in\S(\H)$ is $\trace(\rho F_y)$. This is commonly called the Born's rule. Note that POVMs do not explicitly consider the post-measurement state of the quantum system, i.e., post-measurement collapse of the state. To also model the post-measurement state collapse, we need to consider an implementation $\B=\{B_y\}_{y\in\Y}$ of POVM $\F=\{F_y\}_{y\in\Y}$ such that $F_y=B_y^\dag B_y$ for all $y\in\Y$. For any operator $B$, $B^\dag$ denotes its complex conjugate transpose. Using implementation $\B=\{B_y\}_{y\in\Y}$, the post-measurement state is given by 
\begin{align*}
    \rho_{\B|y}=\frac{B_y \rho B_y^\dag}{\trace(B_y \rho B_y^\dag)}=\frac{B_y \rho B_y^\dag}{\trace(\rho F_y)},
\end{align*}
if the outcome $y\in\Y$ is observed and the pre-measurement state is $\rho\in\S(\H)$. It is postulated that it is impossible to observe a quantum system, i.e., take measurements, without disturbing its state via the so-called post-measurement state collapse~\cite{fuchs1996quantum}. This is a feature that often underlies quantum cryptography and key distribution~\cite{bennett2014quantum}. However, some measurements disturb the state more than others. A family of measurements that do not significantly disturb the state are called gentle~\cite{aaronson2019gentle} or weak measurements~\cite{fuchs1996quantum}. We use these measurements to develop a notion of information leakage when the eavesdropper aims to be undetected. 

\begin{definition}[Gentle Measurements]
    Consider the set of states $S\subseteq\S(\H)$ and constants $\alpha,\delta\in[0,1]$. 
    
    The POVM $\F=\{F_y\}_{y\in\Y}$ is $(\alpha,\delta)$-weakly gentle on $S$ if it possesses at least one implementation $\B=\{B_y\}_{y\in\Y}$, i.e., $F_y=B_y^\dag B_y$ for all $y\in\Y$, such that 
    \begin{align} \label{eqn:def_probability_gentle}
        \mathbb{P}\left\{\|\rho_{\B|Y}-\rho\|_{\trace}\leq \alpha,\forall \rho\in S\right\}\geq 1-\delta,
    \end{align}
    where the probability is taken with respect to measurement outcomes, i.e., random variable $Y\in\Y$ denotes the measurement outcome. The set of all $(\alpha,\delta)$-weakly gentle POVMs on $S$ is denoted by $\G_{(\alpha,\delta)}(S)$. 
    
    The POVM $\F=\{F_y\}_{y\in\Y}$ is $(\alpha,\delta)$-strongly gentle on $S$ if~\eqref{eqn:def_probability_gentle} holds for {\em all} implementations $\B=\{B_y\}_{y\in\Y}$. The set of all $(\alpha,\delta)$-strongly gentle POVMs on $S$ is denoted by $\overline{\G}_{(\alpha,\delta)}(S)$. 
\end{definition}

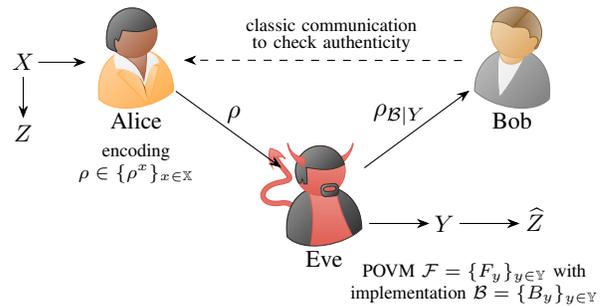
\begin{figure}
    \centering
    \begin{tikzpicture}[]
    \node[bob, minimum width=1cm, mirrored] (bob) {\small Bob};
    \node[alice, minimum width=1cm, left=4cm of bob] (alice) {\small Alice};
    \node[devil, minimum width=1cm, below left=5mm and 1.5cm of bob] (eve) {\small Eve};
    \node[] at (-.5,-3.){
    \begin{minipage}{3.4cm}
    \scriptsize\centering
    POVM $\F=\{F_y\}_{y\in\Y}$ with implementation $\B=\{B_y\}_{y\in\Y}$
    \end{minipage}
    };
    \draw[-{Stealth}] (-6.3,-.055) -- (alice);
    \node[] at (-6.5,-.05) {\small $X$};
    \draw[-{Stealth}] (-6.5,-.25) -- (-6.5,-.8) ;
    \node[] at (-6.5,-1) {\small $Z$};
    \draw[-{Stealth}] (alice) -- (eve);
    \draw[-{Stealth}] (eve) -- (bob);
    \draw[-{Stealth},dashed] (bob) -- (alice);
    \draw[-{Stealth}] (-1.9,-2.2) -- (-1.1,-2.2);
    \node[] at (-.9,-2.2) {\small $Y$};
    \draw[-{Stealth}] (-.7,-2.2) -- (.1,-2.2);
    \node[] at (.3,-2.15) {\small $\widehat{Z}$};
    \node[] at (-3.7,-0.75) {$\rho$};
    \node[] at (-1.5,-0.75) {$\rho_{\B|Y}$};
    \node[] at (-2.4,0.3) {\scriptsize 
    \begin{minipage}{2.4cm}\centering 
    classic communication \\[-.2em] to check authenticity
    \end{minipage}
    };
    \node[] at (-5,-1.4) {\scriptsize 
    \begin{minipage}{2.4cm}\centering 
    encoding $\rho\in\{\rho^x\}_{x\in\X}$
    \end{minipage}
    };
    \end{tikzpicture}
    \vspace{-3mm}
    \caption{Communication schematic between Alice, Bob, and Eve.}
    \vspace{-5mm}
    \label{fig:ABE}
\end{figure}

\section{Maximal Leakage with Gentle Measurements} \label{sec:gentle_leakage}
We use discrete random variable $X\in\X$ to model the classical data that must be protected. Without loss of generality, we assume that $p_X(x)=\mathbb{P}\{X=x\}>0$ for all $x\in\X$; otherwise, we can trim the set $\X$ to only contain elements with non-zero probability. For each $X=x\in\X$, Alice prepares quantum system $A$ in mixed state $\rho^x\in\P(\H)$, i.e., encodes $x$ as $\rho^x$. The ensemble $\mathcal{E}:=\{p_X(x),\rho^x\}_{x\in\X}$ models Alice's quantum encoding of the classical data. Alice intends to communicate this quantum system to Bob as in Figure~\ref{fig:ABE}. However, this communication maybe intercepted by an eavesdropper, Eve. From the perspective of someone who does not know the realization of classical data $X$, i.e., Bob and Eve, the state is given by the expected density operator $\rho=\mathbb{E}\{\rho^X\}=\sum_{x\in\X}p_X(x)\rho^x$.

Eve wants to estimate a possibly randomized discrete function of the classical data $X$, denoted by discrete random variable $Z$. The nature and composition of this random variable, i.e., the target of the eavesdropping attack, is not known by Alice or Bob. This can only be done by taking measurements of the quantum system, which is modelled by POVM $\F=\{F_y\}_{y\in\Y}$. Let discrete random variable $Y\in\Y$ denote the outcome of the measurement. By Born's rule, $\mathbb{P}\{Y=y\,|\,X=x\}=\trace(\rho^x F_y)$ for all $x\in\X$ and $y\in\Y$. Eve uses the measurement outcome to take a one-shot\footnote{Number of guesses is immaterial in measuring information leakage~\cite{Farokhi_PRA}.} guess of the random variable $Z$ denoted by the random variable $\widehat{Z}$. Maximal quantum leakage, defined in~\cite{Farokhi_PRA}, measures the multiplicative increase of the probability of correctly guessing the realization of any random variable $Z$ with and without access to the measurement outcome $Y$. 

\begin{definition}[Maximal Quantum Leakage~\cite{Farokhi_PRA}] \label{def:qml} Maximal quantum leakage from random variable $X$ through quantum system $A$ is 
\begin{subequations}\label{eqn:def_qml}
    \begin{align}
    \mathcal{Q}(X\!\rightarrow \!A)_{\rho}
    :=&\sup_{\{F_y\}_{y\in\Y}}\sup_{Z,\widehat{Z}} \log \!\left(\!\frac{\mathbb{P}\{Z=\widehat{Z}\}}{\displaystyle \max_{z\in\mathbb{Z}}\mathbb{P}\{Z=z\}} \!\right)\label{eqn:qml:original_def}\\
    =&\sup_{\{F_y\}_{y\in\Y}} \log\!\left(\sum_{y\in\mathbb{Y}} \max_{x\in\mathbb{X}} \trace(\rho^x F_y) \!\right)\!,
\end{align}
\end{subequations}
    where, in~\eqref{eqn:qml:original_def}, the inner supremum is taken over all random variables $Z,\widehat{Z}\in\Z$ with arbitrary finite support set $\Z$ and the outer supremum is taken over all POVMs $\F=\{F_y\}_{y\in\Y}$ with arbitrary finite outcome set $\Y$. 
\end{definition}

For maximal quantum leakage, defined above, we search over all measurements to find one that results in the most information gain, measured by multiplicative increase in the probability of correctly guessing a secret. This would imply that the  post-measurement state could be arbitrarily disturbed. This disturbance can be used by Alice or Bob to identify presence of an eavesdropper (in which case they can cease communication or switch to a different medium). For instance, the famous quantum key distribution algorithm of BB84~\cite{bennett2014quantum} uses the post-measurement state disturbance, no-cloning theorem in quantum systems, and access to a classical communication channel to identify presence of an eavesdropper. Therefore, it is postulated that Eve must use gentle or weak measurements to avoid getting caught~\cite{fuchs1996quantum}. Note that the use of the trace-distance in the definition of gentle measurements can be justified if the Alice or Bob use the Bayesian hypothesis-testing to identify presence of an eavesdropper~\cite{fuchs1998information}. Other definitions, e.g., using fidelity~\cite{fuchs1996quantum,fuchs1998information}, can be also used. However, they can be translated into the definition relying on trace distance due to the relationship between fidelity and trace distance~\cite[Theorem~9.3.1]{wilde2013quantum}.

\begin{definition}[Gentle Quantum Leakage] \label{def:gqml} For any $\alpha,\delta\in[0,1]$,  $(\alpha,\delta)$-{\em weakly} gentle quantum leakage from random variable $X$ through quantum system $A$ is 
\begin{align}
    \mathcal{L}_{(\alpha,\delta)}&(X\rightarrow A)_{\rho}\nonumber\\&:=\sup_{\{F_y\}_y\in\G_{(\alpha,\delta)}(S)}\sup_{Z,\widehat{Z}} \log_2 \left(\frac{\mathbb{P}\{Z=\widehat{Z}\}}{\displaystyle \max_{z\in\mathbb{Z}}\mathbb{P}\{Z=z\}} \right)\!,
\end{align}
    where the outer supremum is now taken over the set of all $(\alpha,\delta)$-{\em weakly} gentle measurements on $S:=\{\rho^x\}_{x\in\X}$ denoted by $\G_{(\alpha,\delta)}(S)$. Similarly, $(\alpha,\delta)$-{\em strongly} gentle quantum leakage is 
\begin{align}
    \overline{\mathcal{L}}_{ (\alpha,\delta)}& (X\rightarrow A)_{\rho}\nonumber\\&:=\sup_{\{F_y\}_y\in\overline{\G}_{(\alpha,\delta)}(S)}\sup_{Z,\widehat{Z}} \log_2 \left(\frac{\mathbb{P}\{Z=\widehat{Z}\}}{\displaystyle \max_{z\in\mathbb{Z}}\mathbb{P}\{Z=z\}} \right)\!,
\end{align}   
where the outer supremum is now taken over the set of all $(\alpha,\delta)$-{\em strongly} gentle measurements.
\end{definition}

This notion of information leakage keeps the disturbance on the state caused by measurement below $\alpha$ with probability of at least $1-\delta$ (with respect to measurement outcomes). This way, we can investigate the trade-off between information leakage and Eve's chance of getting caught by Alice or Bob. 

\begin{corollary} \label{cor:bound}
    The following properties hold:
    \begin{itemize}
        \item $\mathcal{L}_{(\alpha,\delta)}(X\rightarrow A)_{\rho}$ and $\overline{\mathcal{L}}_{(\alpha,\delta)}(X\rightarrow A)_{\rho}$ are non-decreasing in $\alpha,\delta\in[0,1]$;
        \item $\mathcal{L}_{(\alpha,\delta)}(X\rightarrow A)_{\rho}\geq \overline{\mathcal{L}}_{(\alpha,\delta)}(X\rightarrow A)_{\rho}$ for all $\alpha,\delta\in[0,1]$;
        \item $\mathcal{L}_{1,\delta}(X\rightarrow A)_{\rho}=\mathcal{L}_{\alpha,1}(X\rightarrow A)_{\rho}=\overline{\mathcal{L}}_{1,\delta}(X\rightarrow A)_{\rho}=\overline{\mathcal{L}}_{\alpha,1}(X\rightarrow A)_{\rho}=\mathcal{Q}(X\!\rightarrow \!A)_{\rho}$ for all $\alpha,\delta\in[0,1]$.
    \end{itemize}
\end{corollary}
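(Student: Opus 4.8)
The plan is to reduce all three claims to elementary monotonicity of suprema under set inclusion, the relevant sets being the feasible families of measurements $\G_{(\alpha,\delta)}(S)$ and $\overline{\G}_{(\alpha,\delta)}(S)$ over which the suprema in Definition~\ref{def:gqml} are taken, together with the single analytic fact that the normalized trace distance of two density operators never exceeds~$1$. Throughout, I would use that, for a fixed POVM, the objective $\sup_{Z,\widehat{Z}}\log_2\!\big(\mathbb{P}\{Z=\widehat{Z}\}/\max_z\mathbb{P}\{Z=z\}\big)$ depends on that POVM only through the induced conditional probabilities $\trace(\rho^x F_y)$, so the sole mechanism at work is enlarging or shrinking the feasible family.

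For the first claim, fix an implementation $\B=\{B_y\}_{y\in\Y}$ and note that the event $\{\|\rho_{\B|Y}-\rho\|_{\trace}\leq\alpha,\ \forall\rho\in S\}$ in \eqref{eqn:def_probability_gentle} grows with $\alpha$, hence its probability is non-decreasing in $\alpha$, while the threshold $1-\delta$ is non-increasing in $\delta$. Thus, if $\alpha\leq\alpha'$ and $\delta\leq\delta'$, then \eqref{eqn:def_probability_gentle} at level $(\alpha,\delta)$ implies it at level $(\alpha',\delta')$ for the same $\B$, giving $\G_{(\alpha,\delta)}(S)\subseteq\G_{(\alpha',\delta')}(S)$ (the weak notion asks for one good $\B$) and $\overline{\G}_{(\alpha,\delta)}(S)\subseteq\overline{\G}_{(\alpha',\delta')}(S)$ (the strong notion asks for all $\B$). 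Taking the supremum of one and the same objective over these nested sets shows $\mathcal{L}_{(\alpha,\delta)}(X\rightarrow A)_{\rho}$ and $\overline{\mathcal{L}}_{(\alpha,\delta)}(X\rightarrow A)_{\rho}$ are non-decreasing in $(\alpha,\delta)$. For the second claim, an $(\alpha,\delta)$-strongly gentle POVM satisfies \eqref{eqn:def_probability_gentle} for every implementation, hence for at least one, so it is $(\alpha,\delta)$-weakly gentle; thus $\overline{\G}_{(\alpha,\delta)}(S)\subseteq\G_{(\alpha,\delta)}(S)$, and the supremum of the same objective over the larger set gives $\mathcal{L}_{(\alpha,\delta)}(X\rightarrow A)_{\rho}\geq\overline{\mathcal{L}}_{(\alpha,\delta)}(X\rightarrow A)_{\rho}$.

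For the third claim, I would use that $\|\rho_{\B|Y}-\rho\|_{\trace}\leq1$ holds surely, as both operators are density operators. Hence at $\alpha=1$ the event in \eqref{eqn:def_probability_gentle} is the entire sample space, with probability $1\geq1-\delta$ for all $\delta\in[0,1]$; and at $\delta=1$ the required bound $1-\delta=0$ holds vacuously. In either case every POVM, together with every one of its implementations (and each POVM has at least one, e.g.\ $B_y=\sqrt{F_y}$), belongs to both $\G_{(\alpha,\delta)}(S)$ and $\overline{\G}_{(\alpha,\delta)}(S)$, so both sets coincide with the set of all POVMs. Substituting this equality of feasible sets into Definition~\ref{def:gqml} and comparing with \eqref{eqn:def_qml} identifies all four quantities with $\mathcal{Q}(X\rightarrow A)_{\rho}$, the choice of logarithm base being immaterial.

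I do not expect a genuine obstacle here: each item is a short set-inclusion argument. The only point demanding a little care is the weak-versus-strong bookkeeping---verifying that ``there exists a gentle implementation'' and ``all implementations are gentle'' each respect the relevant inclusions in the expected direction---and, for the endpoint identities, making explicit that a vacuous gentleness constraint leaves the supremum over POVMs exactly equal to the one defining $\mathcal{Q}(X\rightarrow A)_{\rho}$.
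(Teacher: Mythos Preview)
Your proposal is correct and follows essentially the same approach as the paper: both reduce all three items to monotonicity of suprema under the set inclusions $\overline{\G}_{(\alpha,\delta)}(S)\subseteq\G_{(\alpha,\delta)}(S)$ and $\G_{(\alpha,\delta)}(S)\subseteq\G_{(\alpha',\delta')}(S)$, $\overline{\G}_{(\alpha,\delta)}(S)\subseteq\overline{\G}_{(\alpha',\delta')}(S)$ for $\alpha\leq\alpha'$, $\delta\leq\delta'$, together with the observation that at $\alpha=1$ or $\delta=1$ both feasible sets coincide with all POVMs. Your write-up simply fills in more of the justification (why the event is monotone in $\alpha$, why trace distance is bounded by $1$, the existence of at least one implementation) than the paper's terse proof, and your remark about the logarithm base correctly flags a harmless notational inconsistency between Definitions~\ref{def:qml} and~\ref{def:gqml}.
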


\begin{proof}
    These properties can be immediately seen from that $\overline{\G}_{(\alpha,\delta)}(\{\rho^x\}_{x\in\X})\subseteq\G_{(\alpha,\delta)}(\{\rho^x\}_{x\in\X})$, and that $\G_{(\alpha,\delta)}(\{\rho^x\}_{x\in\X})\subseteq \G_{(\alpha',\delta')}(\{\rho^x\}_{x\in\X})$ and $\overline{\G}_{(\alpha,\delta)}(\{\rho^x\}_{x\in\X})\subseteq \overline{\G}_{(\alpha',\delta')}(\{\rho^x\}_{x\in\X})$ if $\alpha'\geq \alpha$ and $\delta'\geq \delta$. Also, as $\alpha\rightarrow 1$ or $\delta\rightarrow 1$, $\G_{(\alpha,\delta)}(\{\rho^x\}_{x\in\X})$ and $\overline{\G}_{(\alpha,\delta)}(\{\rho^x\}_{x\in\X})$ converges to the set of all POVMs.
\end{proof}

Using the same line of reasoning as in the maximal quantum leakage in~\cite{Farokhi_PRA}, we can derive a semi-explicit formula for the gentle quantum leakage. This derivation follows from eliminating the first supremum on random variables $Z$ and $\widehat{Z}$ using their classical counterpart, i.e., maximal leakage, in~\cite{issa2019operational}. The following result reformulates the gentle quantum leakage using the Sibson mutual information of order infinity, which is a generalized notion of information in classical information theory and can be reviewed in~\cite{issa2019operational,7308959}.

\begin{corollary} \label{cor:mql} The gentle quantum leakage is 
    \begin{align} \label{eqn:maxim_leakage_formula}
    \mathcal{L}_{(\alpha,\delta)}(X\rightarrow A)_{\rho}&=\sup_{\{F_y\}_y\in\G_{(\alpha,\delta)}(\{\rho^x\}_{x\in\X})}I_\infty(X;Y),\\
    \overline{\mathcal{L}}_{(\alpha,\delta)}(X\rightarrow A)_{\rho}&=\sup_{\{F_y\}_y\in\overline{\G}_{(\alpha,\delta)}(\{\rho^x\}_{x\in\X})}I_\infty(X;Y),
 \end{align}
 where $I_\infty(X;Y)$ is the Sibson mutual information  of order infinity between random variables $X$ and $Y$ computed as
 \begin{align*}
      I_\infty(X;Y):=& \log_2\left(\sum_{y\in\mathbb{Y}} \max_{x\in\mathbb{X}:p_X(x)>0} \mathbb{P}\{Y=y\,|\,X=x\} \right)\\
      =&\log_2\left(\sum_{y\in\mathbb{Y}} \max_{x\in\mathbb{X}} \trace(\rho^x F_y) \right).
 \end{align*}
\end{corollary}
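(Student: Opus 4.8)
\emph{Proof idea.} The plan is to reduce the statement to the classical operational characterization of maximal leakage and then re-express the result using Born's rule, exactly as in the derivation of the closed-form formula for $\mathcal{Q}(X\to A)_\rho$ in~\cite{Farokhi_PRA}. The only difference here is that the outer supremum ranges over the restricted families $\G_{(\alpha,\delta)}(\{\rho^x\}_{x\in\X})$ or $\overline{\G}_{(\alpha,\delta)}(\{\rho^x\}_{x\in\X})$ rather than over all POVMs, and this restriction plays no role in the inner optimization over $Z$ and $\widehat{Z}$.

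First I would fix an arbitrary POVM $\F=\{F_y\}_{y\in\Y}$ in $\G_{(\alpha,\delta)}(\{\rho^x\}_{x\in\X})$ (the strongly-gentle case is verbatim the same). By Born's rule, $\F$ induces the classical channel $p_{Y|X}(y|x)=\trace(\rho^x F_y)$. Since, by construction, $Z$ is a (possibly randomized) function of $X$ alone and $\widehat{Z}$ is computed solely from $Y$, the four variables form a Markov chain $Z-X-Y-\widehat{Z}$. Then I would invoke the operational characterization of classical maximal leakage from~\cite{issa2019operational}: for every such Markov chain,
\begin{align*}
\sup_{Z,\widehat{Z}} \log_2\!\left(\frac{\mathbb{P}\{Z=\widehat{Z}\}}{\max_{z}\mathbb{P}\{Z=z\}}\right) = I_\infty(X;Y),
\end{align*}
where the supremum is over all $Z,\widehat{Z}$ with arbitrary finite support, and $I_\infty(X;Y)$ is the Sibson mutual information of order infinity, which admits the closed form $\log_2\big(\sum_{y}\max_{x:\,p_X(x)>0} p_{Y|X}(y|x)\big)$. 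Substituting $p_{Y|X}(y|x)=\trace(\rho^x F_y)$ yields the second expression for $I_\infty(X;Y)$ in the statement; here the assumption $p_X(x)>0$ for all $x\in\X$, made without loss of generality at the start of Section~\ref{sec:gentle_leakage}, lets us drop the qualifier and maximize over all of $\X$.

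Finally, since the identity above holds for each fixed gentle POVM, I would take the supremum of both sides over $\F\in\G_{(\alpha,\delta)}(\{\rho^x\}_{x\in\X})$ (respectively, over $\overline{\G}_{(\alpha,\delta)}(\{\rho^x\}_{x\in\X})$), recovering~\eqref{eqn:maxim_leakage_formula} and its strongly-gentle analogue. I do not anticipate a genuine obstacle: the substance is that the two inner suprema in Definition~\ref{def:gqml} collapse to $I_\infty(X;Y)$ by a purely classical argument that is entirely insensitive to the gentleness constraint. The only points requiring care are (i) confirming the Markov structure $Z-X-Y-\widehat{Z}$, so that the classical result of~\cite{issa2019operational} applies without modification, and (ii) checking that the finite-support and strict-positivity conditions hold so that the closed form for $I_\infty$ is valid — both routine.
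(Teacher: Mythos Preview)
Your proposal is correct and follows exactly the paper's own argument: the paper simply states that the proof follows from applying Theorem~1 in~\cite{issa2019operational} to resolve the supremum over $Z$ and $\widehat{Z}$, which is precisely the classical maximal-leakage identity you invoke for each fixed (gentle) POVM before taking the outer supremum. Your write-up is just a more detailed unpacking of that one-line proof, including the Markov-chain justification and the Born's-rule substitution.
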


\begin{proof}
    The proof follows from applying Theorem~1 in~\cite{issa2019operational} to resolve the supremum on $Z$ and $\widehat{Z}$.
\end{proof}

In what follows, we prove that the gentle quantum leakage is an appropriate notion of information leakage by demonstrating that it meets important axioms~\cite{Farokhi_PRA,issa2019operational,farokhi2021measuring,muller2013quantum,10106314906367}, such as positivity (i.e., gentle quantum leakage is always greater than or equal to zero), independence (i.e., gentle quantum leakage is zero if the quantum state is independent of the classical data), and unitary invariance (gentle quantum leakage remains unchanged by transforming the quantum encoding of the classical data by a unitary channel). 

\begin{proposition}[Positivity and Independence] \label{prop:independence}
    The following properties hold:
    \begin{itemize}
        \item $\mathcal{L}_{(\alpha,\delta)}(X\rightarrow A)_{\rho}\geq 0$ with equality if and only if $\rho^x=\rho^{x'}$ for all $x,x'\in\X$;
        \item $\overline{\mathcal{L}}_{(\alpha,\delta)}(X\rightarrow A)_{\rho} \geq 0$ with equality if $\rho^x=\rho^{x'}$ for all $x,x'\in\X$.
    \end{itemize}
\end{proposition}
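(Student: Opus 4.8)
The plan is to reduce everything to the Sibson‑information formula of Corollary~\ref{cor:mql}, namely $\mathcal{L}_{(\alpha,\delta)}(X\rightarrow A)_{\rho}=\sup_{\{F_y\}_y\in\G_{(\alpha,\delta)}(S)}I_\infty(X;Y)$ and its strongly gentle analogue, and then treat the sign and the two equality directions separately. For \emph{positivity}, note that for \emph{any} POVM $\{F_y\}_y$ the pointwise bound $\max_{x}\trace(\rho^xF_y)\ge\sum_{x}p_X(x)\trace(\rho^xF_y)=\trace(\rho F_y)$, summed over $y\in\Y$, gives $\sum_{y}\max_x\trace(\rho^xF_y)\ge\trace\!\big(\rho\sum_yF_y\big)=1$, hence $I_\infty(X;Y)\ge0$. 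Since the trivial POVM $\{I\}$ with implementation $\{I\}$ leaves every state unchanged, it lies in $\G_{(\alpha,\delta)}(S)$, so the feasible set is nonempty and $\mathcal{L}_{(\alpha,\delta)}(X\rightarrow A)_{\rho}\ge0$; the same $I_\infty\ge0$ bound gives $\overline{\mathcal{L}}_{(\alpha,\delta)}(X\rightarrow A)_{\rho}\ge0$ whenever $\overline{\G}_{(\alpha,\delta)}(S)\neq\emptyset$ (e.g.\ when $\alpha=1$ or when the average state $\rho$ is maximally mixed, in which case $\{I\}$ is strongly gentle), which I will record without dwelling on it. For the \emph{easy equality direction}: if $\rho^x=\rho^{x'}$ for all $x,x'$, then $\rho^x=\rho$ for every $x$, so $\trace(\rho^xF_y)$ is independent of $x$ and the inequality above is tight, i.e.\ $I_\infty(X;Y)=0$ for every POVM; taking the supremum over weakly (resp.\ strongly) gentle POVMs yields $\mathcal{L}_{(\alpha,\delta)}(X\rightarrow A)_{\rho}=\overline{\mathcal{L}}_{(\alpha,\delta)}(X\rightarrow A)_{\rho}=0$, which settles the second bullet and the ``if'' part of the first.

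For the \emph{hard direction}, suppose $\rho^{x_0}\neq\rho^{x_1}$ for some $x_0,x_1\in\X$; I will exhibit a weakly gentle POVM with $I_\infty(X;Y)>0$, contradicting $\mathcal{L}_{(\alpha,\delta)}(X\rightarrow A)_{\rho}=0$. Put $\hat H:=(\rho^{x_0}-\rho^{x_1})/\|\rho^{x_0}-\rho^{x_1}\|$ with $\|\cdot\|$ the operator norm, so $-I\le\hat H\le I$ and $\trace((\rho^{x_0}-\rho^{x_1})\hat H)=\trace((\rho^{x_0}-\rho^{x_1})^2)/\|\rho^{x_0}-\rho^{x_1}\|>0$; for $G=\tfrac12(I+\hat H)$ the two‑outcome POVM $\{G,I-G\}$ then has $\trace(\rho^{x_0}G)>\trace(\rho^{x_1}G)$ and $\sum_y\max_x\trace(\rho^x\cdot)=\max_x\trace(\rho^xG)+1-\min_x\trace(\rho^xG)>1$. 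To enforce gentleness I use two regimes. If $\delta>0$, dilute: the POVM $\{\epsilon G,\ \epsilon(I-G),\ (1-\epsilon)I\}$ with $\epsilon\le\delta$ admits an implementation whose last Kraus operator is $\sqrt{1-\epsilon}\,I$, so the event that the post‑measurement state moves at all is contained in the event that the outcome is one of the first two, of probability $\epsilon\le\delta$, while a direct computation gives $I_\infty(X;Y)=\log_2\!\big(1+\epsilon(\max_x\trace(\rho^xG)-\min_x\trace(\rho^xG))\big)>0$. If instead $\delta=0$ (so $\alpha>0$, else we are in the corner $\alpha=\delta=0$), weaken the measurement: take $\{G_\epsilon,I-G_\epsilon\}$ with $G_\epsilon=\tfrac12(I+\epsilon\hat H)$ and canonical implementation $B_\pm=\sqrt{G_\epsilon},\sqrt{I-G_\epsilon}$; as $\epsilon\to0^{+}$ we have $B_\pm\to\tfrac1{\sqrt2}I$, hence $\|\rho^x_{\B|\pm}-\rho^x\|_{\trace}\to0$ uniformly over the finitely many $x$, so for $\epsilon$ small enough every outcome is within $\alpha$ in trace distance, while $I_\infty(X;Y)>0$ for every $\epsilon>0$ as above. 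In either regime $\mathcal{L}_{(\alpha,\delta)}(X\rightarrow A)_{\rho}\ge I_\infty(X;Y)>0$, so $\mathcal{L}_{(\alpha,\delta)}(X\rightarrow A)_{\rho}=0$ forces $\rho^x=\rho^{x'}$ for all $x,x'$; the residual corner $\alpha=\delta=0$ I will treat separately, as it is genuinely degenerate (there the only weakly gentle POVMs are those leaving every $\rho^x$ exactly fixed).

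I expect the construction of a gentle‑yet‑informative POVM to be the only non‑routine step — concretely, the uniform‑in‑$x$ continuity estimate $\|\rho^x_{\B|\pm}-\rho^x\|_{\trace}\to0$ and the bookkeeping of which outcomes are permitted to disturb the state — whereas positivity and the easy equality direction follow immediately from the max‑versus‑average inequality applied inside the formula of Corollary~\ref{cor:mql}.
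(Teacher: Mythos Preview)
Your argument is correct, and for positivity and the ``if'' direction it coincides with the paper's (both reduce to $I_\infty(X;Y)\ge0$; the paper phrases the ``if'' direction via the sandwich $0\le\overline{\mathcal L}\le\mathcal L\le\mathcal Q=0$ from Corollary~\ref{cor:bound}, which is equivalent to your direct computation). For the ``only if'' direction the two proofs diverge. The paper builds, via the appendix Lemma~\ref{lemma:gentle_construct}, a single three-outcome family $\{F_+,F_-,F_0\}$ depending on a parameter $\epsilon>0$ and on the positive part $L_+$ of $\rho^x-\rho^{x'}$; gentleness for small $\epsilon$ is the content of that lemma, and $\trace((\rho^x-\rho^{x'})F_+)>0$ is then computed explicitly. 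You instead split on whether $\delta>0$ or $\delta=0$: in the first regime you dilute a distinguishing binary test with a weight-$(1-\epsilon)$ identity outcome, so gentleness is immediate from the do-nothing outcome having probability $\ge1-\delta$; in the second you use the weak binary measurement $G_\epsilon=\tfrac12(I+\epsilon\hat H)$ with square-root Kraus operators and a continuity argument. Your route is more elementary (no auxiliary lemma) and cleanly covers the boundary lines $\alpha=0,\delta>0$ and $\alpha>0,\delta=0$, which the paper's Lemma~\ref{lemma:gentle_construct} does not directly reach (it is stated for $\alpha>0$, and its $\epsilon'$ collapses to $0$ when $\delta=0$). The corner $(\alpha,\delta)=(0,0)$ that you flag as degenerate is likewise not handled by the paper's proof; both arguments are effectively for $(\alpha,\delta)\neq(0,0)$. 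Your caution about non-emptiness of $\overline{\G}_{(\alpha,\delta)}(S)$ in the second bullet is also warranted and goes beyond what the paper verifies.
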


\begin{proof}
     Note that $I_\infty(X;Y)\geq 0$~\cite{issa2019operational,7308959} and therefore, 
     \begin{align*}
         \mathcal{L}_{(\alpha,\delta)}(X\rightarrow A)_{\rho}&=\sup_{\{F_y\}_y\in \G_{(\alpha,\delta)}(\{\rho^x\}_{x\in\X})} I_\infty(X;Y)\geq 0,\\
         \overline{\mathcal{L}}_{ (\alpha,\delta)}(X\rightarrow A)_{\rho}&=\sup_{\{F_y\}_y\in \overline{\G}_{(\alpha,\delta)}(\{\rho^x\}_{x\in\X})} I_\infty(X;Y)\geq 0.
     \end{align*}

    Corollary~\ref{cor:bound} implies that  $0\leq \overline{\mathcal{L}}_{(\alpha,\delta)}(X\rightarrow A)_{\rho}\leq \mathcal{L}_{(\alpha,\delta)}(X\rightarrow A)_{\rho}\leq \mathcal{Q}(X\rightarrow A)_{\rho}=0$ if $\rho^x=\rho^{x'}$ for all $x,x'\in\X$~\cite{Farokhi_PRA}. 

     For the reverse, assume that $\mathcal{L}_{(\alpha,\delta)}(X\rightarrow A)_{\rho}=0$. Following~\cite[Lemma~1]{issa2019operational}, it must be that $X$ and $Y$ are independent for all POVMs $\{F_y\}_y\in \G_{(\alpha,\delta)}(\{\rho^x\}_{x\in\X})$. Thus $
        \trace(\rho^x F_y)=\mathbb{P}\{Y=y\,|\,X=x\}=\mathbb{P}\{Y=y\,|\,X=x'\}=\trace(\rho^{x'} F_y)$,
    where the second equality follows from statistical independence of $X$ and $Y$. Therefore, $\trace((\rho^x-\rho^{x'}) F_y)=0$ for all $(\alpha,\delta)$-weakly gentle  measurements and all $x,x'\in\X$. We use this to show that $\rho^x-\rho^{x'}=0$ for all all $x,x'\in\X$. Assume that there exists $x,x'\in\X$ such that $\rho^x-\rho^{x'}\neq 0$. Since $\rho^x-\rho^{x'}$ is Hermitian, we can diagonalize it as $\rho^x-\rho^{x'}=\sum_{i}\lambda_i \ket{i}\bra{i}$. Define $L_+:=\sum_{i:\lambda_i\geq 0}\lambda_i \ket{i}\bra{i}$ and $L_-:=\sum_{i:\lambda_i< 0}\lambda_i \ket{i}\bra{i}$. By construct $L_+$ and $L_-$ are both semi-positive definite and $L_+L_-=L_-L_+=0$. Without loss of generality, assume that $L_+>0$ (otherwise, we can swap $x$ and $x'$). Consider 
    \begin{align*}
        B_+&=\sqrt{\frac{1-2\epsilon^2}{2}}I+\epsilon L_+,\\
        B_-&=\sqrt{\frac{1-2\epsilon^2}{2}}I-\epsilon L_+,\\
        B_0&=\sqrt{2}\epsilon(I- L_+^2)^{1/2}.
    \end{align*}
    Lemma~\ref{lemma:gentle_construct} in Appendix~\ref{app:useful_lemma} shows that there exists $\epsilon'>0$ such that, for all $0\leq \epsilon\leq \epsilon'$, POVM $\{F_+,F_-,F_0\}$ with $F_y=B_yB_y^\dag$ with $y\in\{+,-,0\}$ is $(\alpha,\delta)$-weakly gentle on $\{\rho^x\}_{x\in\X}$. We have
    \begin{align*}
        \trace((\rho^x-\rho^{x'}) F_+)
        =&\trace(B_+\rho^xB_+)-\trace(B_+\rho^{x'}B_+)\\
        =&\sqrt{\frac{1-2\epsilon^2}{2}}\epsilon\trace((\rho^x-\rho^{x'})L_+)\\
        &+\sqrt{\frac{1-2\epsilon^2}{2}}\epsilon\trace(L_+(\rho^x-\rho^{x'}))\\
        &+\epsilon^2\trace(L_+^2(\rho^x-\rho^{x'}))\\
        =&\sqrt{\frac{1-2\epsilon^2}{2}}\epsilon\trace((L_+-L_-)L_+)\\
        &+\sqrt{\frac{1-2\epsilon^2}{2}}\epsilon\trace(L_+(L_+-L_-))\\
        &+\epsilon^2\trace(L_+^2(L_+-L_-))\\
        =&2\sqrt{\frac{1-2\epsilon^2}{2}}\epsilon\trace(L_+^2)+\epsilon^2\trace(L_+^3)\\
        > & 0.
    \end{align*}
    This is in contradiction with that $\trace((\rho^x-\rho^{x'}) F_y)=0$ for all $(\alpha,\delta)$-weakly gentle measurements. Therefore, it must be that $\rho^x-\rho^{x'}=0$.
\end{proof}

In the next result, we show that the strongly gentle quantum leakage remains unchanged by application of unitary quantum channels on the quantum encoding of the classical data. Unitary operators are often used for quantum computing~\cite{nielsen2010quantum} and quantum machine learning~\cite{Benedetti_2019}. Therefore, this result demonstrates that quantum information leakage, measured by strongly gentle quantum leakage, cannot increase by employing arbitrary quantum computing routines (which also contains classical computing routines as a subset).

\begin{proposition}[Unitary Invariance] \label{prop:unitary_invariance}
    For any unitary channel $\U(\rho)=U\rho U^\dag$ with unitary operator $U\in\L(\H)$, 
    \begin{align*}
        \overline{\mathcal{L}}_{ (\alpha,\delta)}(X\rightarrow A)_{\U(\rho)}= \overline{\mathcal{L}}_{(\alpha,\delta)}(X\rightarrow A)_{\rho}.
    \end{align*}
\end{proposition}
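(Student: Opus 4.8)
The plan is to produce a leakage-preserving bijection between the strongly gentle POVMs of the two ensembles. Write $\tilde{\rho}^x:=U\rho^x U^\dag$ and $\tilde{\rho}:=U\rho U^\dag$, so that after the channel $\U$ the encoding is $\{p_X(x),\tilde{\rho}^x\}_{x\in\X}$, the average state is $\tilde{\rho}$, and the relevant set of states in the gentleness definition becomes $\tilde{S}:=\{\tilde{\rho}^x\}_{x\in\X}$. For a POVM $\F=\{F_y\}_y$ set $\U(\F):=\{UF_yU^\dag\}_y$; this is again a POVM ($UF_yU^\dag\ge 0$ and $\sum_y UF_yU^\dag=U(\sum_y F_y)U^\dag=I$), and $\F\mapsto\U(\F)$ is a bijection on the set of all POVMs with inverse $\G\mapsto\{U^\dag G_yU\}_y$. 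The two things to check are (i) this bijection maps $\overline{\G}_{(\alpha,\delta)}(S)$ onto $\overline{\G}_{(\alpha,\delta)}(\tilde{S})$, and (ii) it preserves the objective $\sum_y\max_x\trace(\rho^x F_y)$ appearing in Corollary~\ref{cor:mql}; granting these, the proposition follows by taking suprema.

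For (i) I would first lift the bijection to the level of implementations: if $B_y^\dag B_y=F_y$ then $(UB_yU^\dag)^\dag(UB_yU^\dag)=UB_y^\dag B_yU^\dag=UF_yU^\dag$, so $\{UB_yU^\dag\}_y$ implements $\U(\F)$; and conversely any implementation $\{C_y\}_y$ of $\U(\F)$ equals $\{UB_yU^\dag\}_y$ for the implementation $B_y:=U^\dag C_yU$ of $\F$. Thus $\B\mapsto\{UB_yU^\dag\}_y$ is a bijection between implementations of $\F$ and implementations of $\U(\F)$ --- this surjectivity is exactly what the ``for all implementations'' clause of strong gentleness needs. Next, if the pre-measurement state is $\tilde{\rho}^x$ and we use the implementation $\{UB_yU^\dag\}_y$, the unnormalized post-measurement operator is $UB_yU^\dag\,\tilde{\rho}^x\,UB_y^\dag U^\dag=U(B_y\rho^x B_y^\dag)U^\dag$, so the post-measurement state is $U\rho^x_{\B|y}U^\dag$ and, by unitary invariance of the trace distance, $\|U\rho^x_{\B|y}U^\dag-\tilde{\rho}^x\|_{\trace}=\|\rho^x_{\B|y}-\rho^x\|_{\trace}$; moreover $\trace(\tilde{\rho}^x\,UF_yU^\dag)=\trace(\rho^x F_y)$, so the outcome distribution of $Y$ is the same. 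Consequently the event $\{\,\|\rho_{\B|Y}-\rho\|_{\trace}\le\alpha\ \forall\rho\in S\,\}$ and the corresponding event for $\{UB_yU^\dag\}_y$ on $\tilde{S}$ are the same subset of $\Y$ and carry the same probability, so \eqref{eqn:def_probability_gentle} holds for one iff it holds for the other. Running this equivalence through the implementation bijection gives $\F\in\overline{\G}_{(\alpha,\delta)}(S)\iff\U(\F)\in\overline{\G}_{(\alpha,\delta)}(\tilde{S})$, which is (i); part (ii) is the identity $\trace(\tilde{\rho}^x\,UF_yU^\dag)=\trace(\rho^x F_y)$ again.

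Finally I would combine the pieces via Corollary~\ref{cor:mql}: $\overline{\mathcal{L}}_{(\alpha,\delta)}(X\rightarrow A)_{\U(\rho)}=\sup_{\F\in\overline{\G}_{(\alpha,\delta)}(\tilde{S})}\log_2\big(\sum_y\max_x\trace(\tilde{\rho}^x F_y)\big)$; substituting $F_y=UG_yU^\dag$, with $\{G_y\}_y$ ranging over $\overline{\G}_{(\alpha,\delta)}(S)$ by the bijection, turns the bracket into $\sum_y\max_x\trace(\rho^x G_y)$ term by term, so the supremum equals $\overline{\mathcal{L}}_{(\alpha,\delta)}(X\rightarrow A)_{\rho}$. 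The computations are all one-liners; the only genuinely delicate point is ensuring the implementation map is \emph{surjective} onto the implementations of $\U(\F)$, which is what lets us transport the universal quantifier over implementations across the channel. (The identical argument with ``there exists an implementation'' in place of ``for all implementations'' would also show that the weakly gentle leakage is unitary invariant.)
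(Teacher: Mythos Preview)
Your proof is correct and follows essentially the same route as the paper's: conjugation $\F\mapsto\{UF_yU^\dag\}$ gives a bijection of POVMs that preserves both the strongly-gentle constraint set (via the induced bijection on implementations, which is exactly the point needed to transport the universal quantifier) and the Sibson-$\infty$ objective, whence equality of the suprema via Corollary~\ref{cor:mql}; the paper merely phrases the bijection as two set inclusions (its Facts~\#1 and~\#2). Your final parenthetical --- that the same argument with ``there exists an implementation'' in place of ``for all implementations'' yields unitary invariance of the \emph{weakly} gentle leakage as well --- is also correct, and is in fact sharper than what the paper states: its Proposition~\ref{prop:dataprocessing} records only the weaker bound $\mathcal{L}_{(\alpha,\delta)}(X\rightarrow A)_{\U(\rho)}\le\mathcal{L}_{(\alpha+\beta(\U),\delta)}(X\rightarrow A)_{\rho}$, obtained there by choosing the implementation $B'_y=B_yU$ rather than $B'_y=U^\dag B_yU$.
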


\begin{proof}
    From Corollary~\ref{def:qml}, we have
    \begin{align*}
        \overline{ \mathcal{L}}_{(\alpha,\delta)}&(X\rightarrow A)_{\U(\rho)}\\&\hspace{-.2in}=\sup_{\{F_y\}_y\in\overline{\G}_{(\alpha,\delta)}(\{\U(\rho^x)\}_{x\in\X})} &\hspace{-.15in}\log_2\left(\sum_{y\in\mathbb{Y}} \max_{x\in\mathbb{X}} \trace(\mathcal{U}(\rho^x) F_y) \!\!\right)\!\!.
    \end{align*}
    Consider POVM $\F=\{F_y\}_{y\in\Y}$. It is easy to see that $\trace(\U(\rho^x)F_y)= \trace\left(\rho^x F'_y\right)$ with $F'_y:=U^\dag F_yU$ for all $y\in\Y$. Note that $\{F'_y\}_{y\in\Y}$ is a POVM~\cite{Farokhi_PRA}. Let
    \begin{align*} 
        \G':=\{\{F'_y\}_{y\in\Y}\,|\,&F'_y=U^\dag F_yU,\forall y\in\Y,\nonumber\\
        &\{F_y\}_{y\in\Y}\in\overline{\G}_{(\alpha,\delta)}(\{\U(\rho^x)\}_{x\in\X})\}. 
    \end{align*}
    As a result,
    \begin{align}
        \overline{ \mathcal{L}}_{(\alpha,\delta)}&(X\rightarrow A)_{\U(\rho)}\!=\!\!\!\!\sup_{\{F'_y\}_y\in\G'} \!\!\!\!\log_2\!\!\left(\sum_{y\in\mathbb{Y}} \max_{x\in\mathbb{X}} \trace(\rho^x F'_y) \!\!\right)\!\!.
        \label{eqn:proof:2}
    \end{align}
    
    Fact \#1 ($\G'\subseteq \overline{\G}_{(\alpha,\beta)}(\{\rho^x\}_{x\in\X})$): Pick any $\{F'_y\}_{y\in\Y}\in\G'$ with any implementation $\{B'_y\}_{y\in\Y}$, i.e.,  ${B'_y}^\dag B'_y=F'_y$ for all $y\in\Y$. By construct, there must exists $\{F_y\}_{y\in\Y}\in \overline{\G}_{(\alpha,\delta)}(\{\U(\rho^x)\}_{x\in\X})$ such that $F'_y=U^\dag F_yU$ for all $y\in\Y$.  For all $x\in\X$, we have
    \begin{align*}
         \hspace{.5in}&\hspace{-.5in} \left\|\frac{B'_y \rho^x {B'_y}^\dag}{\trace(B'_y\rho^x {B'_y}^\dag)} -\rho^x\right\|_{\trace}
         \\=& 
         \left\|\frac{(B'_y U^\dag) U \rho^x U^\dag (B'_y U^\dag )^\dag}{\trace((B'_y U^\dag) U \rho^x U^\dag (B'_y U^\dag )^\dag)} -\rho^x\right\|_{\trace}\\
         =& 
         \left\|\frac{(B'_y U^\dag ) \U(\rho^x) (B'_y U^\dag )^\dag}{\trace(\U(\rho^x) UF'_yU^\dag)} -\rho^x\right\|_{\trace}\\
         =& 
         \left\|U\left( \frac{(B'_y U^\dag ) \U(\rho^x) (B'_y U^\dag )^\dag}{\trace(\U(\rho^x)F_y)} -\rho^x \right)U^\dag\right\|_{\trace}\\
         =& 
         \left\|\frac{(UB'_yU^\dag) \U(\rho^x) (UB'_yU^\dag)^\dag}{\trace(\U(\rho^x)F_y)} -\U(\rho^x)\right\|_{\trace}\\
         \leq & \alpha,
    \end{align*}
    with probability of at least $1-\delta$ because $\{UB'_yU^\dag\}_y$ is an implementation of $\{F_y\}_y$, i.e., $(UB'_yU^\dag)^\dag (UB'_yU^\dag)=U{B'_y}^\dag U^\dag UB'_yU^\dag=UF'_y U^\dag=F_y$ for all $y\in\Y$, and $\{F_y\}_y\in\overline{\G}_{(\alpha,\beta)}(\{\U(\rho^x)\}_{x\in\X})$. Therefore, $\{F'_y\}_{y\in\Y}\in\overline{\G}_{(\alpha,\beta)}(\{\rho^x\}_{x\in\X})$. As a result, $\G'\subseteq \overline{\G}_{(\alpha,\beta)}(\{\rho^x\}_{x\in\X})$. 

    Fact \#2 ($\overline{\G}_{(\alpha,\beta)}(\{\rho^x\}_{x\in\X})\subseteq \G'$): Pick any $\{F'_y\}_{y\in\Y}\in \overline{\G}_{(\alpha,\beta)}(\{\rho^x\}_{x\in\X})$. Define $F_y=U F'_y U^\dag$ for all $y\in\Y$. Clearly, $F'_y=U^\dag F_y U$ because $U$ is unitary. Let $\{B_y\}_{y\in\Y}$ be any implementation of $\{F_y\}_{y\in\Y}$.  For all $x\in\X$, we have
    \begin{align*}
        \hspace{.5in}&\hspace{-.5in} \left\|\frac{B_y \U(\rho^x) B_y^\dag}{\trace(B_y \U(\rho^x) B_y^\dag)} -\U(\rho^x) \right\|_{\trace}
        \\ =&
        \left\|\frac{B_y U \rho^x U^\dag B_y^\dag}{\trace(B_y \U(\rho^x) B_y^\dag)} -U\rho^x U^\dag \right\|_{\trace}\\
        =& 
        \left\|U\left(\frac{(U^\dag B_y U) \rho^x (U^\dag B_y U)^\dag}{\trace( \rho^x U^\dag  F_y U)} -\rho^x \right)U^\dag \right\|_{\trace}\\
        =& 
        \left\|\frac{(U^\dag B_y U) \rho^x (U^\dag B_y U)^\dag}{\trace(\rho^x  F'_y)} -\rho^x \right\|_{\trace}\\
        \leq & \alpha,
    \end{align*}
    with probability of at least $1-\delta$ because $\{U^\dag B_y U\}_{y\in\Y}$ is an implementation for $\{F'_y\}_{y\in\Y}\in \overline{\G}_{(\alpha,\beta)}(\{\rho^x\}_{x\in\X})$, i.e., $(U^\dag B_y U)^\dag U^\dag B_y U=U^\dag B_y^\dag B_y U=U^\dag F_y U=F'_y$ for all $y\in\Y$. Therefore, $\{F'_y\}_{y\in\Y}\in\G'$. As a result, $\overline{\G}_{(\alpha,\beta)}(\{\rho^x\}_{x\in\X})\subseteq \G'$. 

    Combining Facts \#1 and \#2, we get $\overline{\G}_{(\alpha,\beta)}(\{\rho^x\}_{x\in\X})=\G'$. As a result, we may rewrite~\eqref{eqn:proof:2} to get
    \begin{align*}
        \overline{\mathcal{L}}_{ (\alpha,\delta)} &(X\rightarrow A)_{\U(\rho)}\\
        &\!= \!\!\!\!\sup_{\{F_y\}_y\in\overline{\G}_{(\alpha,\beta)}(\{\rho^x\}_{x\in\X})} \!\!\!\!\log_2\left(\sum_{y\in\mathbb{Y}} \max_{x\in\mathbb{X}} \trace(\rho^x F_y) \!\!\right)\!\!\\
        &\! =\overline{\mathcal{L}}_{(\alpha,\delta)}(X\rightarrow A)_{\rho}.
    \end{align*}
    This concludes the proof.
\end{proof}

For the weakly gentle quantum leakage, we may not be able to achieve unitary invariance. Instead, we prove a weak data-processing inequality that shows that the weakly gentle quantum leakage under application of the unitary operator can be bounded. This is done in the following proposition.

\begin{proposition}[Weak Data-Processing Inequality] \label{prop:dataprocessing}
    For any unitary channel $\U(\rho)=U\rho U^\dag$ with unitary operator $U\in\L(\H)$,
    \begin{align*}
        \mathcal{L}_{(\alpha,\delta)}(X\rightarrow A)_{\U(\rho)}\leq \mathcal{L}_{(\alpha+\beta(\U),\delta)}(X\rightarrow A)_{\rho},
    \end{align*}
    where $\beta(\U):=\sup_{\rho\in\{\rho^x\}_{x\in\X}} \|\U(\rho)-\rho\|_{\trace}.$
\end{proposition}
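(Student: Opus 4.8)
The plan is to follow the template of the proof of Proposition~\ref{prop:unitary_invariance}, transferring every admissible POVM for the transformed ensemble $\U(\E):=\{p_X(x),\U(\rho^x)\}_{x\in\X}$ to an admissible POVM for the original ensemble $\E:=\{p_X(x),\rho^x\}_{x\in\X}$ with the same value of $I_\infty(X;Y)$. The difference from the strongly gentle case is that weak gentleness supplies only a \emph{single} good implementation, and the natural implementation one is forced to use leaves the post-measurement state near $\U(\rho^x)$ rather than near $\rho^x$, costing an extra $\beta(\U)$ in the trace-distance budget via a triangle inequality. By Corollary~\ref{cor:mql}, it therefore suffices to show that for every $\F=\{F_y\}_{y\in\Y}\in\G_{(\alpha,\delta)}(\{\U(\rho^x)\}_{x\in\X})$ there exists $\F'=\{F'_y\}_{y\in\Y}\in\G_{(\alpha+\beta(\U),\delta)}(\{\rho^x\}_{x\in\X})$ with $\trace(\U(\rho^x)F_y)=\trace(\rho^x F'_y)$ for all $x\in\X$ and $y\in\Y$.

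First I would take $F'_y:=U^\dag F_y U$. This is a POVM since $\sum_{y}F'_y=U^\dag(\sum_y F_y)U=I$ and $F'_y\geq 0$, and the trace identity holds because $\trace(\U(\rho^x)F_y)=\trace(U\rho^x U^\dag F_y)=\trace(\rho^x U^\dag F_y U)=\trace(\rho^x F'_y)$. Next I would fix an implementation $\B=\{B_y\}_{y\in\Y}$ of $\F$ witnessing $(\alpha,\delta)$-weak gentleness on $\{\U(\rho^x)\}_{x\in\X}$, and set $\B':=\{B_y U\}_{y\in\Y}$; this is an implementation of $\F'$ because $(B_yU)^\dag(B_yU)=U^\dag B_y^\dag B_y U=U^\dag F_yU=F'_y$. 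The key computation is that the post-measurement state of $\rho^x$ under $\B'$ on outcome $y$ coincides with the post-measurement state of $\U(\rho^x)$ under $\B$: denoting this common state by $\tau^x_y$,
\begin{align*}
    \tau^x_y:=\frac{(B_yU)\rho^x(B_yU)^\dag}{\trace((B_yU)\rho^x(B_yU)^\dag)}
    =\frac{B_y\U(\rho^x)B_y^\dag}{\trace(B_y\U(\rho^x)B_y^\dag)}.
\end{align*}
Moreover $\trace(\rho^x F'_y)=\trace(\U(\rho^x)F_y)$ for every $x$ (and hence $\trace(\rho F'_y)=\trace(\U(\rho)F_y)$), so the outcome distribution appearing in the gentleness condition for $\B'$ on $\{\rho^x\}_{x\in\X}$ is identical to the one for $\B$ on $\{\U(\rho^x)\}_{x\in\X}$.

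Consequently, on the event of probability at least $1-\delta$ on which $\|\tau^x_Y-\U(\rho^x)\|_{\trace}\leq\alpha$ for all $x\in\X$ (the gentleness guarantee for $\B$ on $\{\U(\rho^x)\}_{x\in\X}$), the triangle inequality and the definition of $\beta(\U)$ give, for all $x\in\X$,
\begin{align*}
    \|\tau^x_Y-\rho^x\|_{\trace}
    &\leq \|\tau^x_Y-\U(\rho^x)\|_{\trace}+\|\U(\rho^x)-\rho^x\|_{\trace}\\
    &\leq \alpha+\beta(\U),
\end{align*}
so $\B'$ witnesses $\F'\in\G_{(\alpha+\beta(\U),\delta)}(\{\rho^x\}_{x\in\X})$. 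Combining this with the trace identity, $I_\infty(X;Y)$ evaluated for $\F$ on $\U(\E)$ equals $I_\infty(X;Y)$ evaluated for $\F'$ on $\E$, which by Corollary~\ref{cor:mql} is at most $\mathcal{L}_{(\alpha+\beta(\U),\delta)}(X\rightarrow A)_{\rho}$; taking the supremum over $\F\in\G_{(\alpha,\delta)}(\{\U(\rho^x)\}_{x\in\X})$ then yields $\mathcal{L}_{(\alpha,\delta)}(X\rightarrow A)_{\U(\rho)}\leq\mathcal{L}_{(\alpha+\beta(\U),\delta)}(X\rightarrow A)_{\rho}$. I expect the only delicate point to be the outcome-distribution bookkeeping — confirming that the ``good event'' for $\B$ on $\U(\E)$ transfers verbatim to a ``good event'' for $\B'$ on $\E$ — together with the observation that, unlike in Proposition~\ref{prop:unitary_invariance} (where one may invoke \emph{all} implementations), here the only implementation we control keeps the post-measurement state near $\U(\rho^x)$, which is exactly where the $\beta(\U)$ slack is introduced.
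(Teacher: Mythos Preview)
Your proposal is correct and follows essentially the same route as the paper's proof: both conjugate the POVM via $F'_y=U^\dag F_y U$, push the witnessing implementation to $B'_y=B_yU$, identify the post-measurement state of $\rho^x$ under $B'_y$ with that of $\U(\rho^x)$ under $B_y$, and apply the triangle inequality through $\U(\rho^x)$ to pick up the extra $\beta(\U)$. If anything, your explicit remark that the outcome distributions under $\F'$ on $\E$ and $\F$ on $\U(\E)$ coincide (so the probability-$\geq 1-\delta$ event transfers verbatim) makes a point the paper leaves implicit.
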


\begin{proof}
    Following a similar line of reasoning as in the proof of Proposition~\ref{prop:unitary_invariance}, we get
    \begin{align*}
        \mathcal{L}_{(\alpha,\delta)}&(X\rightarrow A)_{\U(\rho)}\\
        =&\!\!\!\!\sup_{\{F_y\}_y\in\G_{(\alpha,\delta)}(\{\U(\rho^x)\}_{x\in\X})} \log_2\left(\sum_{y\in\mathbb{Y}} \max_{x\in\mathbb{X}} \trace(\mathcal{E}(\rho^x) F_y) \!\!\right)\\
        =&\!\!\!\!\sup_{\{F_y\}_y\in\G''} \!\!\!\!\log_2\left(\sum_{y\in\mathbb{Y}} \max_{x\in\mathbb{X}} \trace(\rho^x F_y) \!\!\right)\!\!,
    \end{align*}
    where
    \begin{align*} 
        \G'':=\{\{F'_y\}_{y\in\Y}\,|\,&F'_y=U^\dag F_yU,\forall y\in\Y,\nonumber\\
        &\{F_y\}_{y\in\Y}\in\G_{(\alpha,\delta)}(\{\U(\rho^x)\}_{x\in\X})\}.
    \end{align*}
    Based on this, to prove the result, we have to show that $\G''\subseteq \G_{(\alpha+\beta(U),\delta)}(\{\rho^x\}_{x\in\X})$. To do so, pick any $\{F'_y\}_{y\in\Y}\in\G''$. There must exists $\{F_y\}_{y\in\Y}\in\G_{(\alpha,\delta)}(\{\U(\rho^x)\}_{x\in\X})$ such that $F'_y=U^\dag F_yU$ for all $y\in\Y$. Because $\{F_y\}_{y\in\Y}\in\G_{(\alpha,\delta)}(\{\U(\rho^x)\}_{x\in\X})$, there must exist an implementation $\B=\{B_y\}_{y\in\Y}$ such that $\mathbb{P}\left\{\|\rho_{\B|Y}-\rho\|_{\trace}\leq \alpha,\forall \rho\in S\right\}\geq 1-\delta$. Define $B'_y =B_y U$ for all $y\in\Y$. Evidently $\{B'_y\}_{y\in\Y}$ is an implementation of $\{F'_y\}_{y\in\Y}$ because ${B'_y}^\dag B'_y=(B_y U)^\dag (B_y U)=U^\dag F_y U=F'_y$ for all $y\in\Y$. For all $x\in\X$, we have
    \begin{align*}
         \hspace{.5in}&\hspace{-.5in} \left\|\frac{B'_y \rho^x {B'_y}^\dag}{\trace(B'_y\rho^x {B'_y}^\dag)} -\rho^x\right\|_{\trace}
         \\=& 
         \left\|\frac{(B'_y U^\dag) U \rho^x U^\dag (B'_y U^\dag)^\dag}{\trace((B'_y U^\dag) U \rho^x U^\dag (B'_y U^\dag)^\dag)} -\rho^x\right\|_{\trace}\\
         =& 
         \left\|\frac{(B'_y U^\dag) \U(\rho^x) (B'_y U^\dag)^\dag}{\trace(\U(\rho^x) UF'_y U^\dag)} -\rho^x\right\|_{\trace}\\
         \leq& 
        \left\|\frac{(B'_y U^\dag) \U(\rho^x) (B'_y U^\dag)^\dag}{\trace(\U(\rho^x)F_y)} -\U(\rho^x)\right\|_{\trace}\\
        &+\left\|\U(\rho^x) -\rho^x\right\|_{\trace}\\
        \leq &\alpha+\beta(\U),
    \end{align*}
    with probability larger than equal to $1-\delta$. Therefore, $\{F'_y\}_{y\in\Y}\in\G_{(\alpha+\beta(\U),\delta)}(\{\rho^x\}_{x\in\X})$, which means $\G''\subseteq \G_{(\alpha+\beta(\U),\delta)}(\{\rho^x\}_{x\in\X})$. 
    This concludes the result. 
\end{proof}

In the next proposition, we derive an upper bound for the gentle quantum leakage based on maximal quantum leakage. This demonstrates that, as expected, a discrete random variable $X\in\X$ has no more than $\log_2(|\X|)$ bits of information for an adversary to learn. In addition, the amount of leaked information is upper bounded by $2\log_2(\dim(\H))$. Hence, embedding or encoding classical information in a ``small'' quantum system, i.e., $\dim(\H)\ll |\X|$, can restrict the leakage (and perhaps simultaneously the utility of such encoding for information processing). 

\begin{proposition}[Upper Bound] \label{property:2}
     $\overline{\mathcal{L}}_{(\alpha,\delta)}(X\rightarrow A)_{\U(\rho)}\leq \mathcal{L}_{(\alpha,\delta)}(X\rightarrow A)_{\U(\rho)}\leq \min\{\log_2(|\X|),2\log_2(\dim(\H))\}$ if $\rho\in\S(\H)$.
\end{proposition}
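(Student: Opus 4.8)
The plan is to obtain both inequalities from facts already established, so almost nothing new is needed. For the left inequality, note that $\{p_X(x),\U(\rho^x)\}_{x\in\X}$ is itself a valid ensemble of density operators, since $\U(\rho^x)=U\rho^x U^\dag\in\S(\H)$ whenever $\rho^x\in\S(\H)$ and $U$ is unitary. Applying the second bullet of Corollary~\ref{cor:bound} to this ensemble gives $\overline{\mathcal{L}}_{(\alpha,\delta)}(X\rightarrow A)_{\U(\rho)}\le\mathcal{L}_{(\alpha,\delta)}(X\rightarrow A)_{\U(\rho)}$ at once. This does \emph{not} use unitary invariance of the leakage -- which we only have for the strongly gentle version (Proposition~\ref{prop:unitary_invariance}) -- because we never compare the $\U(\rho)$ quantity with the $\rho$ quantity; the bound we are chasing depends on $\rho$ only through $|\X|$ and $\dim(\H)$, neither of which changes under $\U$.

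For the right inequality it therefore suffices to prove $\mathcal{Q}(X\rightarrow A)_\sigma\le\min\{\log_2(|\X|),2\log_2(\dim(\H))\}$ for an arbitrary ensemble $\{p_X(x),\sigma^x\}_{x\in\X}$ with $\sigma^x\in\S(\H)$, and then specialise to $\sigma^x=\U(\rho^x)$. First reduce the gentle leakage to $\mathcal{Q}$: since $\G_{(\alpha,\delta)}(\{\sigma^x\}_{x\in\X})$ is a subset of the family of all POVMs, Corollary~\ref{cor:mql} (equivalently, the monotonicity-plus-limit identity in Corollary~\ref{cor:bound}) gives $\mathcal{L}_{(\alpha,\delta)}(X\rightarrow A)_\sigma=\sup_{\{F_y\}_y\in\G_{(\alpha,\delta)}}I_\infty(X;Y)\le\sup_{\{F_y\}_y\ \mathrm{POVM}}I_\infty(X;Y)=\mathcal{Q}(X\rightarrow A)_\sigma$.

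It remains to bound $\sum_{y\in\Y}\max_{x\in\X}\trace(\sigma^xF_y)$ over POVMs in two complementary ways. Since each $\trace(\sigma^xF_y)\ge0$, the maximum over $x$ is dominated by the sum over $x$, so $\sum_{y}\max_{x}\trace(\sigma^xF_y)\le\sum_{x\in\X}\trace\!\big(\sigma^x\sum_{y}F_y\big)=\sum_{x\in\X}\trace(\sigma^x)=|\X|$. For the other bound, $F_y\ge0$ has largest eigenvalue at most $\trace(F_y)$, hence $F_y\preceq\trace(F_y)I$ and so $\trace(\sigma^xF_y)\le\trace(F_y)$ for the state $\sigma^x$; summing, $\sum_{y}\max_{x}\trace(\sigma^xF_y)\le\sum_{y}\trace(F_y)=\trace(I)=\dim(\H)$. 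Taking $\log_2$ of the smaller right-hand side and noting $\log_2(\dim(\H))\le 2\log_2(\dim(\H))$ yields $\mathcal{Q}(X\rightarrow A)_\sigma\le\min\{\log_2(|\X|),2\log_2(\dim(\H))\}$; chaining this with the previous two steps proves the proposition. This merely re-derives, along the lines of \cite{Farokhi_PRA}, the standard cardinality and dimension bounds for maximal quantum leakage -- in fact giving the slightly sharper $\log_2(\dim(\H))$, which is all we need.

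There is no genuine obstacle here: every step is either a one-line operator inequality or a direct appeal to Corollary~\ref{cor:bound} or Corollary~\ref{cor:mql}. The only two points needing a moment's care are (i) checking that replacing the ensemble by $\{\U(\rho^x)\}_{x\in\X}$ is legitimate even though weakly gentle leakage need not be unitarily invariant, and (ii) the elementary fact $\trace(\sigma F)\le\trace(F)$ for a density operator $\sigma$ and $F\ge0$, which follows from $F\preceq\trace(F)I$.
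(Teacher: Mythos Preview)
Your proof is correct and follows the same route as the paper: use Corollary~\ref{cor:bound} to bound the gentle leakage by the maximal quantum leakage $\mathcal{Q}(X\rightarrow A)$, then invoke the cardinality/dimension upper bound for $\mathcal{Q}$ from~\cite{Farokhi_PRA}. You additionally supply a self-contained derivation of that last bound and, in the process, obtain the sharper $\log_2(\dim(\H))$ in place of the stated $2\log_2(\dim(\H))$; this is a bonus, not a deviation from the paper's approach.
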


\begin{proof}
    The proof follows from Corollary~\ref{cor:bound} and the upper bound for maximal quantum leakage in~\cite{Farokhi_PRA}.
\end{proof}

In the next proposition, we show that the gentle quantum leakage is reduced by application of a global depolarizing channel $\mathcal{D}_{p}:\S(\H)\rightarrow \S(\H)$ defined as
\begin{align} \label{eqn:dep_channel}
    \mathcal{D}_{p}(\rho):=\frac{p}{\dim(\H)}I+(1-p)\rho,
\end{align}
where $p\in[0,1]$ is a probability parameter capturing the magnitude of the depolarization ``noise''.  This implies that gentle quantum leakage accords with intuition (that quantum noise can reduce the information leakage) and follows similar results on privacy in quantum systems~\cite{hirche2022quantum,Farokhi_PRA}.

\begin{proposition} \label{prop:global}
For global depolarizing channel $\mathcal{D}_{p}$,
\begin{align*}
    \mathcal{L}_{(\alpha,\delta)}(X\rightarrow A)_{\mathcal{D}_{p}(\rho)}&=\log_2( p +(1-p)2^{\mathcal{L}_{(\alpha,\delta)}(X\rightarrow A)_{\rho}}),\\
    \overline{\mathcal{L}}_{(\alpha,\delta)}(X\rightarrow A)_{\mathcal{D}_{p}(\rho)}&=\log_2( p +(1-p)2^{\overline{\mathcal{L}}_{(\alpha,\delta)}(X\rightarrow A)_{\rho}}).
\end{align*}
Particularly,
    \begin{align*}
        \frac{\mathrm{d}}{\mathrm{d}p}\mathcal{L}_{(\alpha,\delta)}(X\rightarrow A)_{\mathcal{D}_{p}(\rho)}&<0 \mbox{ if } \mathcal{L}_{(\alpha,\delta)}(X\rightarrow A)_{\rho}>0,\\
        \frac{\mathrm{d}}{\mathrm{d}p}\overline{\mathcal{L}}_{(\alpha,\delta)}(X\rightarrow A)_{\mathcal{D}_{p}(\rho)}&<0 \mbox{ if } \overline{\mathcal{L}}_{(\alpha,\delta)}(X\rightarrow A)_{\rho}>0,
    \end{align*}
which shows that gentle quantum leakage is decreasing in probability parameter $p$.
\end{proposition}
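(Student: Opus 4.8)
The plan is to reduce to the semi-explicit formula of Corollary~\ref{cor:mql} and then track how the depolarizing channel acts on the quantity $\sum_{y}\max_{x}\trace(\rho^{x}F_{y})$ before optimizing over measurements. First I would fix an arbitrary POVM $\F=\{F_{y}\}_{y\in\Y}$. Since $\mathcal{D}_{p}$ is affine, $\trace(\mathcal{D}_{p}(\rho^{x})F_{y})=\frac{p}{\dim(\H)}\trace(F_{y})+(1-p)\trace(\rho^{x}F_{y})$, and the first summand is independent of $x$; hence the inner maximization over $x$ leaves it untouched and, summing over $y$ and using $\sum_{y}F_{y}=I$ together with $\trace(I)=\dim(\H)$,
\begin{align*}
\sum_{y\in\Y}\max_{x\in\X}\trace(\mathcal{D}_{p}(\rho^{x})F_{y})=p+(1-p)\sum_{y\in\Y}\max_{x\in\X}\trace(\rho^{x}F_{y})=p+(1-p)\,2^{\,I_{\infty}(X;Y)}.
\end{align*}
So, for each fixed measurement, the Sibson information of order infinity of the depolarized ensemble is exactly $\log_{2}\!\big(p+(1-p)2^{\,I_{\infty}(X;Y)}\big)$, where $I_{\infty}(X;Y)$ is the value for the original ensemble.

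The second, and I expect hardest, step is to show that the feasible set of measurements does not change under depolarization, i.e.\ $\G_{(\alpha,\delta)}(\{\mathcal{D}_{p}(\rho^{x})\}_{x\in\X})=\G_{(\alpha,\delta)}(\{\rho^{x}\}_{x\in\X})$ and likewise with $\overline{\G}$. Morally this should hold because $\mathcal{D}_{p}$ shrinks every pairwise trace distance $\|\rho^{x}-\rho^{x'}\|_{\trace}$ by the factor $1-p$ while fixing the common centre $I/\dim(\H)$, so a non-disturbing implementation ought to stay non-disturbing; the delicate point is that the post-measurement state $B_{y}\sigma B_{y}^{\dagger}/\trace(\sigma F_{y})$ is a \emph{non-linear} (renormalized) function of $\sigma$, so one cannot simply linearly transport the gentleness certificate and must instead argue that a gentle implementation of one ensemble can be converted into a gentle implementation of the other built on the \emph{same} POVM $\F$ (so that the next step optimizes over a common index set). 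This is where a careful argument — or possibly an adjustment of the tolerance parameters — is needed, and it is the main obstacle.

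Granting that, Corollary~\ref{cor:mql} and the displayed identity give $\mathcal{L}_{(\alpha,\delta)}(X\rightarrow A)_{\mathcal{D}_{p}(\rho)}=\sup_{\{F_{y}\}_{y}\in\G_{(\alpha,\delta)}(\{\rho^{x}\}_{x\in\X})}\log_{2}\!\big(p+(1-p)2^{\,I_{\infty}(X;Y)}\big)$. Because $t\mapsto\log_{2}(p+(1-p)2^{t})$ is continuous and non-decreasing on $\mathbb{R}$ for $p\in[0,1]$, it commutes with the supremum, and a further appeal to Corollary~\ref{cor:mql} yields $\mathcal{L}_{(\alpha,\delta)}(X\rightarrow A)_{\mathcal{D}_{p}(\rho)}=\log_{2}\!\big(p+(1-p)2^{\mathcal{L}_{(\alpha,\delta)}(X\rightarrow A)_{\rho}}\big)$; the strongly gentle identity is the same argument with $\overline{\G}$ in place of $\G$.

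Finally, the monotonicity in $p$ is a one-line computation. Writing $L:=\mathcal{L}_{(\alpha,\delta)}(X\rightarrow A)_{\rho}\ge 0$ (by Proposition~\ref{prop:independence}) and $f(p):=\log_{2}(p+(1-p)2^{L})=\frac{1}{\ln 2}\ln\!\big(2^{L}+p(1-2^{L})\big)$, one gets $f'(p)=\frac{1}{\ln 2}\cdot\frac{1-2^{L}}{p+(1-p)2^{L}}$. The denominator is a convex combination of the positive numbers $1$ and $2^{L}$, hence strictly positive on $[0,1]$, and $L>0$ forces $2^{L}>1$, so the numerator is strictly negative; thus $f'(p)<0$ whenever $L>0$. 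The identical computation with $\overline{L}:=\overline{\mathcal{L}}_{(\alpha,\delta)}(X\rightarrow A)_{\rho}$ handles the strongly gentle case, and when $L=0$ the map is constant, consistent with positivity and independence.
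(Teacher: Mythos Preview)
Your approach is the same as the paper's: reduce via Corollary~\ref{cor:mql} to the per-POVM identity
\[
\sum_{y}\max_{x}\trace(\mathcal{D}_{p}(\rho^{x})F_{y})=p+(1-p)\sum_{y}\max_{x}\trace(\rho^{x}F_{y}),
\]
then optimize. The paper cites this identity from~\cite{Farokhi_PRA}; you derive it directly from affinity of $\mathcal{D}_{p}$ and $\sum_{y}F_{y}=I$, which is equivalent. Your monotonicity argument for pulling the supremum through $t\mapsto\log_{2}(p+(1-p)2^{t})$ and your explicit computation of $f'(p)$ for the ``Particularly'' clause are both correct and in fact more detailed than the paper, which simply states the final formula and asserts the sign of the derivative.

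The one point on which you are \emph{more} careful than the paper is the feasible set. You correctly observe that, by Definition~\ref{def:gqml}, the supremum defining $\mathcal{L}_{(\alpha,\delta)}(X\rightarrow A)_{\mathcal{D}_{p}(\rho)}$ should run over $\G_{(\alpha,\delta)}(\{\mathcal{D}_{p}(\rho^{x})\}_{x\in\X})$, and that to obtain the stated identity one needs this set to coincide with $\G_{(\alpha,\delta)}(\{\rho^{x}\}_{x\in\X})$. The paper's proof does not address this: it simply writes ``where the supremum is taken over $\G_{(\alpha,\delta)}(\{\rho^x\}_{x\in\X})$'' for both lines of the displayed chain, implicitly identifying the two constraint sets without comment. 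So the obstacle you flag is not resolved in the paper either; your proposal is at least as complete as the paper's own argument, and your caution about the nonlinearity of the post-measurement map is well placed.
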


\begin{proof} 
Following~\cite{Farokhi_PRA}, we know that    
\begin{align*}
     \sum_{y\in\mathbb{Y}} \max_{x\in\mathbb{X}} \trace(\mathcal{D}_{p}&(\rho^x)F_y)
     =  p +(1-p)\sum_{y\in\mathbb{Y}}\max_{x\in\mathbb{X}}\trace\left(\rho^x F_y\right).
\end{align*}
Therefore, 
\begin{align*}
        \mathcal{L}_{(\alpha,\delta)}(X&\rightarrow A)_{\mathcal{D}_{p}(\rho)}\\
        &=\log_2\bigg(\sup_{\{F_y\}_y}\sum_{y\in\mathbb{Y}} \max_{x\in\mathbb{X}} \trace(\mathcal{D}_{p}(\rho^x)F_y)\bigg)\\
        &=\log_2\bigg( p +(1-p)\sup_{\{F_y\}_y}\sum_{y\in\mathbb{Y}} \max_{x\in\mathbb{X}}\trace\left(\rho^x F_y\right)\!\!\bigg)\\
        &=\log_2( p +(1-p)2^{\mathcal{L}_{(\alpha,\delta)}(X\rightarrow A)_{\rho}}),
\end{align*}
where the supremum is taken over $\G_{(\alpha,\delta)}(\{\rho^x\}_{x\in\X})$. The proof for $\overline{\mathcal{L}}_{(\alpha,\delta)}(X\rightarrow A)_{\mathcal{D}_{p}(\rho)}$ follows from taking the supremum over $\overline{\G}_{(\alpha,\delta)}(\{\rho^x\}_{x\in\X})$. 
\end{proof}

\section{Approximate Cloning} \label{sec:cloning}
A quantum cloner is a quantum channel $T:\S(\H)\rightarrow \S(\H)\times \S(\H)$. This is often referred to as $1\rightarrow 2$ quantum cloning because 2 copies from 1 state are created. Ideally, we would like this mapping to be such that $T(\rho)=\rho\otimes\rho$ for any $\rho\in\S(\H)$. This is however forbidden by the rules of quantum system (particularly linearity of quantum channels)~\cite{wootters1982single}. Despite this negative result, imperfect or approximate cloning has been shown to be achievable~\cite{bruss1998optimal,buvzek1996quantum}. Let $T_i$ denote the quantum channel to the $i$-th copy of quantum state, i.e., $T_i(\rho)=\trace_{-i}(T(\rho))$, where $\trace_{-i}$ denotes the partial trace with respect all quantum systems except the $i$-th one. 

\begin{theorem}[Cloning Region~\cite{nechita2023asymmetric}] \label{tho:cloner} For all $p_1,p_2\in[0,1]$, there exists quantum cloner $T$ such that $T_i=\mathcal{D}_{p_i}$ if and only if
\begin{align*}
    \frac{d}{2}(d(2-p_1-p_2)+&\sqrt{d^2(p_1-p_2)^2-4(1-p_1)(1-p_2)})\\
    &-(2-p_1-p_2)\leq (d^2-1),
\end{align*}
where $d:=\dim(\H)$.
\end{theorem}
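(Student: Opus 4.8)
The plan is to treat this as a semidefinite feasibility problem about quantum channels with prescribed marginals and to collapse it to a tiny one using the unitary covariance of $\mathcal{D}_p$. First I would pass to the Choi picture: a $1\rightarrow 2$ cloner $T$ (a channel with output on two copies of $\H$) corresponds to a positive operator $J\in\P(\H_0\otimes\H_1\otimes\H_2)$ with $\trace_{12}(J)=I_0$, and its $i$-th marginal channel $T_i$ has Choi operator obtained by tracing out the other output copy, i.e.\ $\trace_2(J)$ for $T_1$ and $\trace_1(J)$ for $T_2$. Writing $C_p$ for the (explicit) Choi operator of $\mathcal{D}_p$, namely a fixed convex combination of the unnormalised maximally entangled operator and $I\otimes I$, the theorem becomes the statement that there exists $J\geq 0$ with $\trace_{12}(J)=I_0$, $\trace_2(J)=C_{p_1}$, and $\trace_1(J)=C_{p_2}$.

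Next I would reduce to covariant cloners. Since $\mathcal{D}_p(U\rho U^\dag)=U\mathcal{D}_p(\rho)U^\dag$, the Haar twirl $J\mapsto\int(\bar U\otimes U\otimes U)\,J\,(\bar U\otimes U\otimes U)^\dag\,\mathrm{d}U$ preserves positivity, the trace-preservation constraint, and both marginal constraints, because each $C_{p_i}$ is invariant under the $\bar U\otimes U$ twirl. Hence it suffices to decide existence of a $J$ in the commutant of $\{\bar U\otimes U\otimes U\}$. Decomposing $(\mathbb{C}^d)^{\otimes 3}$ with one conjugated leg --- equivalently, splitting the two output copies into their symmetric and antisymmetric subspaces --- this commutant is a low-dimensional $*$-algebra in which the defining representation occurs with multiplicity two while two further irreducibles occur once each; hence a covariant $J$ is described by a single $2\times 2$ positive semidefinite block $M$ together with two nonnegative scalars (one of which is absent when $d=2$).

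Finally I would impose the three linear constraints --- trace preservation and the two marginal identities --- which link $M$ to the two scalars and leave only a handful of free real parameters, so that deciding whether the resulting affine slice meets the positive semidefinite cone is an elementary (though not entirely automatic) SDP. Carrying this out, feasibility collapses to one scalar inequality: the surd $\sqrt{d^2(p_1-p_2)^2-4(1-p_1)(1-p_2)}$ is (up to a factor of $d$) the discriminant of the characteristic polynomial of $M$, and the inequality states that an eigenvalue of $M$ stays below a threshold of order $d^2-1$; reading the computation backwards produces an explicit cloner throughout the region, which is the ``if'' direction. I expect the main obstacle to be the representation-theoretic bookkeeping --- decomposing $(\mathbb{C}^d)^{\otimes 3}$ with a conjugated leg, identifying the block that carries the residual degree of freedom, writing $C_{p_i}$ and the partial traces explicitly in that basis, and carefully tracking the sign conventions and small-$d$ degeneracies (in particular the behaviour when the radicand is nonpositive) --- after which only the routine but lengthy eigenvalue algebra remains. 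As a shortcut, the statement is precisely the content of~\cite{nechita2023asymmetric} and can instead simply be invoked.
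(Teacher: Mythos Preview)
The paper does not prove this theorem at all: it is stated with the citation~\cite{nechita2023asymmetric} in the theorem header and is used as a black box in Proposition~\ref{prop:lower_bound}. Your final sentence---that the statement is precisely the content of~\cite{nechita2023asymmetric} and can simply be invoked---is therefore exactly the paper's ``proof''.

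The rest of your proposal goes well beyond the paper and sketches the actual argument from the cited source. That sketch is sound: passing to the Choi picture, reducing to $(\bar U\otimes U\otimes U)$-covariant cloners via the Haar twirl (which preserves both marginal constraints because each $C_{p_i}$ is $(\bar U\otimes U)$-invariant), and then parametrising the commutant is indeed the route taken in~\cite{nechita2023asymmetric}. Your description of the block structure---the defining representation appearing with multiplicity two (giving a $2\times2$ block) plus two multiplicity-one pieces, with one of those absent for $d=2$---matches the decomposition used there, and the identification of the surd with the discriminant of that $2\times2$ block is the correct mechanism by which the single scalar inequality emerges. So nothing is wrong; you have simply supplied a proof outline where the paper is content to cite.
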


We can use this description of the achievable  approximate cloning to provide a lower bound for the weakly gentle quantum leakage. This is explored in the next proposition. 

\begin{proposition} \label{prop:lower_bound}
For all $\alpha,\delta\in[0,1]$, the weakly gentle quantum leakage is lower bounded by
\begin{align*}
    \mathcal{L}_{(\alpha,\delta)}(X\rightarrow A)_{\rho}\geq\mathcal{L}_{(\alpha,0)}(X\rightarrow A)_{\rho}\geq \underline{\mathcal{L}}_\alpha(\{\rho^x\}_{x\in\X}),
\end{align*}
where $\underline{\mathcal{L}}_\alpha(\{\rho^x\}_{x\in\X}):=\log_2(p_2^*+(1-p_2^*)2^{\mathcal{Q}(X\rightarrow A)_\rho})$ and $d=\dim(\H)$ with
    \begin{align*}
        (p_1^*,p_2^*)\in \min_{(p_1,p_2)}\quad  & p_2,\\
        \mathrm{s.t.} \quad & 0\leq p_1\leq 1,\\
        & 0\leq p_2\leq 1,\\
        & p_1 \leq \frac{\alpha}{\|{I}/{\dim(\H)}-\rho^x \|_{\trace}},\forall x\in\X,\\
        & 
        \begin{bmatrix}
            p_1 \\ p_2
        \end{bmatrix}^\top \!\!
        \begin{bmatrix}
         2d^2\!-\!1\!-\!\displaystyle\frac{d^4}{4}  & 
         1\!-\!\displaystyle\frac{d^4}{4} \\
         1\!-\!\displaystyle\frac{d^4}{4} & 
         2d^2\!-\!1\!-\!\displaystyle\frac{d^4}{4}   
        \end{bmatrix}\!\!
        \begin{bmatrix}
            p_1 \\ p_2
        \end{bmatrix}
        \\&\hspace{.2in}+
        \begin{bmatrix}
            -\!2\!-\!d^2 & -\!2\!-\!d^2
        \end{bmatrix}\!\!
        \begin{bmatrix}
            p_1 \\ p_2
        \end{bmatrix}+3\leq 0.
    \end{align*}
\end{proposition}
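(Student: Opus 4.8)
The plan is to exhibit, for the right choice of noise parameters, a concrete $(\alpha,0)$-weakly gentle measurement whose Sibson information matches the claimed lower bound, and then invoke Corollary~\ref{cor:bound} for the first inequality. The idea is that a $1\to 2$ approximate cloner lets Eve make a ``virtual'' copy of the encoded state: she keeps one copy essentially undisturbed for Alice and Bob (this is what gentleness demands), and performs the optimal maximal-leakage measurement on the other copy. Concretely, suppose $T$ is a cloner with $T_1=\mathcal{D}_{p_1}$ and $T_2=\mathcal{D}_{p_2}$. Eve applies $T$, forwards the first output system, and on the second output system $\mathcal{D}_{p_2}(\rho^x)$ performs the POVM achieving $\mathcal{Q}(X\to A)_\rho$. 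By Proposition~\ref{prop:global}'s computation, the Sibson information of the resulting channel is $\log_2(p_2+(1-p_2)2^{\mathcal{Q}(X\to A)_\rho})$, which is exactly $\underline{\mathcal{L}}_\alpha$ once we pick $p_2=p_2^*$.

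The substantive step is showing this measurement is $(\alpha,0)$-weakly gentle on $\{\rho^x\}_{x\in\X}$. The post-measurement state seen by Alice/Bob is the first output system $\mathcal{D}_{p_1}(\rho^x)$ (the overall channel Eve applies is $\rho^x\mapsto$ (forwarded system, her classical outcome), and from Alice/Bob's side the forwarded system is simply $T_1(\rho^x)=\mathcal{D}_{p_1}(\rho^x)$, independent of Eve's outcome — hence $\delta=0$). So I need $\|\mathcal{D}_{p_1}(\rho^x)-\rho^x\|_{\trace}\leq\alpha$ for all $x$. Since $\mathcal{D}_{p_1}(\rho^x)-\rho^x=p_1(I/\dim(\H)-\rho^x)$, this is precisely $p_1\|I/\dim(\H)-\rho^x\|_{\trace}\leq\alpha$, i.e., the third constraint in the optimization. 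One technical point to handle carefully: Definition~1 requires a single implementation $\B$ for which the gentleness bound holds with probability $\geq 1-\delta$ over outcomes; here every outcome gives the same forwarded state, so the bound holds surely, giving $\delta=0$ as needed. (Strictly, to fit the POVM-on-the-original-system formalism of Definition~1, one should dilate: write Eve's cloning-plus-measurement as a POVM $\{F_y\}$ on $\H$ with implementation $B_y$ such that $B_y\rho^x B_y^\dagger/\trace(\cdot)$ equals the forwarded state; I would note that the channel $\rho\mapsto T_1(\rho)$ composed with the measurement channel on the second copy admits such a Stinespring/Kraus description, so the definition applies.)

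The remaining piece is translating Theorem~\ref{tho:cloner}'s feasibility condition into the quadratic matrix inequality in the proposition. Starting from $T_i=\mathcal{D}_{p_i}$ being realizable iff $\frac{d}{2}(d(2-p_1-p_2)+\sqrt{d^2(p_1-p_2)^2-4(1-p_1)(1-p_2)})-(2-p_1-p_2)\leq d^2-1$, I would isolate the square-root term, bound/square both sides (being mindful of signs of the non-radical part, which should be nonnegative in the regime of interest so squaring is reversible), and collect the resulting second-degree polynomial in $(p_1,p_2)$; symmetry in $p_1\leftrightarrow p_2$ guarantees the quadratic form has the displayed symmetric-matrix structure, and matching coefficients of $p_1^2$, $p_1p_2$, the linear terms, and the constant should reproduce the stated $\begin{bmatrix}2d^2-1-d^4/4 & 1-d^4/4\\ 1-d^4/4 & 2d^2-1-d^4/4\end{bmatrix}$, $\begin{bmatrix}-2-d^2 & -2-d^2\end{bmatrix}$, and $3$. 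The main obstacle I anticipate is this last algebraic reduction — in particular verifying that squaring the radical inequality is an equivalence (not just an implication) on the feasible region, so that the minimization of $p_2$ over the quadratic constraint genuinely certifies achievability rather than merely a necessary condition; I would address this by checking the sign of $\frac{d}{2}\cdot d(2-p_1-p_2)-(2-p_1-p_2)-(d^2-1)$ on $[0,1]^2$ and arguing the radicand is automatically nonnegative there (it is $d^2(p_1-p_2)^2-4(1-p_1)(1-p_2)$, which can be negative, but in that case the original inequality holds trivially and one can reduce $p_2$ to the boundary where it is zero). Finally, having produced a feasible $(p_1^*,p_2^*)$ with $p_1^*$ satisfying the gentleness constraint and $p_2^*$ minimal, the argument above yields $\mathcal{L}_{(\alpha,0)}(X\to A)_\rho\geq\log_2(p_2^*+(1-p_2^*)2^{\mathcal{Q}(X\to A)_\rho})=\underline{\mathcal{L}}_\alpha(\{\rho^x\}_{x\in\X})$, completing the proof.
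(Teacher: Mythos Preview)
Your proposal is correct and follows essentially the same approach as the paper: use a $1\to 2$ approximate cloner with marginals $\mathcal{D}_{p_1}$ and $\mathcal{D}_{p_2}$, forward the first copy (disturbance $p_1\|I/\dim(\H)-\rho^x\|_{\trace}$, giving the third constraint), and measure the second copy to extract $\log_2(p_2+(1-p_2)2^{\mathcal{Q}(X\to A)_\rho})$ bits, with feasibility of $(p_1,p_2)$ governed by Theorem~\ref{tho:cloner}. In fact you go further than the paper, which simply asserts ``the last constraint follows from the condition in Theorem~\ref{tho:cloner}'' without carrying out the squaring-and-matching algebra and does not discuss the implementation/dilation point you raise; your caution on both fronts is well placed but not something the paper addresses, so your level of rigor already exceeds what the reference proof provides.
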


\begin{proof}
    The second copy can be used to extract secret information by Eve. By using the cloning technique provided in Theorem~\ref{tho:cloner}, 
    we can see that the second cloned state contains $\log_2(p_2+(1-p_2)2^{\mathcal{Q}(X\rightarrow A)_\rho})$ bits of information~\cite{Farokhi_PRA}. 
    Furthermore, the first copy can be passed to Bob for which the disturbance is given by $\|T_1(\rho)-\rho\|_{\trace}=p_1 \|{I}/{\dim(\H)}-\rho \|_{\trace}.$ Finally, the last constraint follows from the condition in Theorem~\ref{tho:cloner}.\end{proof}

\begin{remark}[Convexity]
    The optimization problem in Proposition~\ref{prop:lower_bound} is convex for $\dim(\H)= 2$. Therefore, for a qubit, i.e., when $\dim(\H)=2$, the lower bound can be easily computed. Unfortunately, the last constraint becomes non-convex if $\dim(\H)>2$.
\end{remark}

\begin{remark}[Incompatibility] A similar measure to $\underline{\mathcal{L}}_\alpha(\{\rho^x\}_{x\in\X})$ in Proposition~\ref{prop:lower_bound} has been proposed as a measure of mutual incompatibility of states within $\{\rho^x\}_{x\in\X}$, albeit for pure states and based on symmetric approximate cloning~\cite{mitra2021optimal}. Here, $\underline{\mathcal{L}}_\alpha(\{\rho^x\}_{x\in\X})$ generalizes this notion  of mutual incompatibility to asymmetric approximate cloning and potentially mixed states. It should be noted that if $\{\rho^x\}_{x\in\X}$ are mutually compatible, i.e., if $\rho^x$ and $\rho^{x'}$ commute for all $x,x'\in \X$, perfect cloning is possible and thus we can attain maximal information leakage with no state disturbance.
\end{remark}

\begin{corollary}
    For all $\alpha,\delta\in[0,1]$, $\mathcal{L}_{(\alpha,\delta)}(X\rightarrow A)_{\rho}=\mathcal{L}_{(\alpha,0)}(X\rightarrow A)_{\rho}=\mathcal{Q}(X\rightarrow A)_{\rho}$ if $[\rho^x,\rho^{x'}]=0$ for all $x,x'\in\X$. 
\end{corollary}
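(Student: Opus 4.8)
The plan is to pin $\mathcal{L}_{(\alpha,0)}(X\rightarrow A)_{\rho}$ to $\mathcal{Q}(X\rightarrow A)_{\rho}$ from both sides. The upper side is free: Corollary~\ref{cor:bound} (monotonicity in $\delta$, together with $\mathcal{L}_{(\alpha,\delta)}\leq\mathcal{Q}$) already gives $\mathcal{L}_{(\alpha,0)}(X\rightarrow A)_{\rho}\leq \mathcal{L}_{(\alpha,\delta)}(X\rightarrow A)_{\rho}\leq \mathcal{Q}(X\rightarrow A)_{\rho}$ for every $\alpha,\delta\in[0,1]$. Hence both claimed equalities will follow from the single reverse inequality $\mathcal{L}_{(\alpha,0)}(X\rightarrow A)_{\rho}\geq \mathcal{Q}(X\rightarrow A)_{\rho}$, i.e.\ from the existence of a measurement that never disturbs any $\rho^x$ yet already attains the unconstrained maximal leakage.

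For the reverse inequality I would exploit the commutation hypothesis to simultaneously diagonalize: there is an orthonormal basis $\{\ket{i}\}_i$ with $\rho^x=\sum_i\lambda_i^x\ket{i}\bra{i}$ for all $x\in\X$. Since $\trace(\rho^x F_y)$ then depends only on the diagonal of $F_y$ in this basis, the supremum defining $\mathcal{Q}(X\rightarrow A)_{\rho}$ is attained by the von Neumann measurement in $\{\ket{i}\}_i$ (or any coarsening grouping indices with a common maximizing $x$), so $\mathcal{Q}(X\rightarrow A)_{\rho}=\log_2\sum_i\max_x\lambda_i^x$. Next I would realize the non-disturbing cloner promised in the remark following Proposition~\ref{prop:lower_bound}: adjoin an ancilla prepared in $\ket{0}$ and apply the "copy" unitary $V\colon\ket{i}\ket{0}\mapsto\ket{i}\ket{i}$; because the $\rho^x$ are diagonal in $\{\ket{i}\}_i$ we get $\trace_{E}[V(\rho^x\otimes\ket{0}\bra{0})V^\dag]=\rho^x$, so the register returned to Bob is literally unchanged, $\|\rho^x-\rho^x\|_{\trace}=0\leq\alpha$, while Eve keeps a register in state $\rho^x$ on which the optimal POVM above produces $Y$ with $\mathbb{P}\{Y=i\mid X=x\}=\lambda_i^x$, hence $I_\infty(X;Y)=\log_2\sum_i\max_x\lambda_i^x=\mathcal{Q}(X\rightarrow A)_{\rho}$. (Equivalently, one can feed this cloner into the proof of Proposition~\ref{prop:lower_bound} in place of the cloners of Theorem~\ref{tho:cloner}; it amounts to taking $p_1=p_2=0$ and gives $\underline{\mathcal{L}}_\alpha=\mathcal{Q}(X\rightarrow A)_{\rho}$.) The concluding chain $\mathcal{L}_{(\alpha,\delta)}=\mathcal{L}_{(\alpha,0)}=\mathcal{Q}(X\rightarrow A)_{\rho}$ is then just the squeeze from the first paragraph.

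The step I expect to be the real obstacle is making this cloning-and-measure attack conform to the gentle-measurement model of Definition~\ref{def:gqml}: that model encodes Eve's action as a single-system POVM $\{F_y\}$ with an implementation $\{B_y\}$ and demands that the \emph{conditional} post-measurement state $\rho_{\B|y}$ — not merely the averaged state Bob physically holds — stay within $\alpha$ of $\rho^x$ with probability at least $1-\delta$. One must therefore produce, on the original Hilbert space alone, a POVM and an implementation that simultaneously (i) have $I_\infty(X;Y)=\mathcal{Q}(X\rightarrow A)_{\rho}$ and (ii) are $(\alpha,0)$-weakly gentle on $\{\rho^x\}_{x\in\X}$, reconciling the apparent tension that reading out the classical label seems to collapse Bob's register through the correlation in the cloned state $\sum_i\lambda_i^x\ket{ii}\bra{ii}$. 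The plan is to load all of the information-extracting work onto Eve's kept register and keep the system returned to Bob inside the fixed-point set of the induced (dephasing-type) channel in the common eigenbasis, then translate that dilation back to an implementation $\{B_y\}$ and verify gentleness outcome by outcome; checking that this translation does not secretly forfeit the information $I_\infty(X;Y)$ is the one genuinely delicate point, and it is exactly where the commutation hypothesis does the heavy lifting.
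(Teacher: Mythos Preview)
Your approach is exactly the paper's: the corollary carries no separate proof, and the paper's justification is the sentence in the remark immediately preceding it—``if $\rho^x$ and $\rho^{x'}$ commute for all $x,x'\in\X$, perfect cloning is possible and thus we can attain maximal information leakage with no state disturbance''—together with the cloning-based lower bound of Proposition~\ref{prop:lower_bound}. Your sandwich via Corollary~\ref{cor:bound} and your perfect-cloner construction in the common eigenbasis are precisely this argument spelled out.

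The concern you raise in your last paragraph is real, and it is a gap the paper itself does not close. Proposition~\ref{prop:lower_bound} bounds the disturbance by $\|T_1(\rho)-\rho\|_{\trace}$, i.e.\ by the \emph{marginal} state handed to Bob, whereas Definition~1 demands that the \emph{conditional} post-measurement state $\rho_{\B|y}=B_y\rho B_y^\dag/\trace(\rho F_y)$ stay within $\alpha$ of $\rho$ with probability at least $1-\delta$. When you translate the copy-and-measure dilation back to an implementation on $\H$, you get $B_i=(\mathrm{id}\otimes\bra{i})V(\mathrm{id}\otimes\ket{0})=\ket{i}\bra{i}$ and hence $\rho^x_{\B|i}=\ket{i}\bra{i}$, which is far from a genuinely mixed $\rho^x$; indeed for $\alpha=0$ (which the corollary's ``for all $\alpha\in[0,1]$'' includes) any implementation with $B_y\rho^xB_y^\dag\propto\rho^x$ for all $x$ forces $F_y\propto I$ on the joint support, killing the information. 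So the perfect-cloning route, as written, shows only that Bob's \emph{averaged} received state is undisturbed, not that the scheme is $(\alpha,0)$-weakly gentle in the sense of Definition~1. You have correctly located a looseness already present in the paper's own reasoning rather than a defect unique to your proposal.
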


\begin{figure}
    \centering
    \begin{tikzpicture}
        \node[] at (0,0) {
        \psfragfig[width=.95\columnwidth]{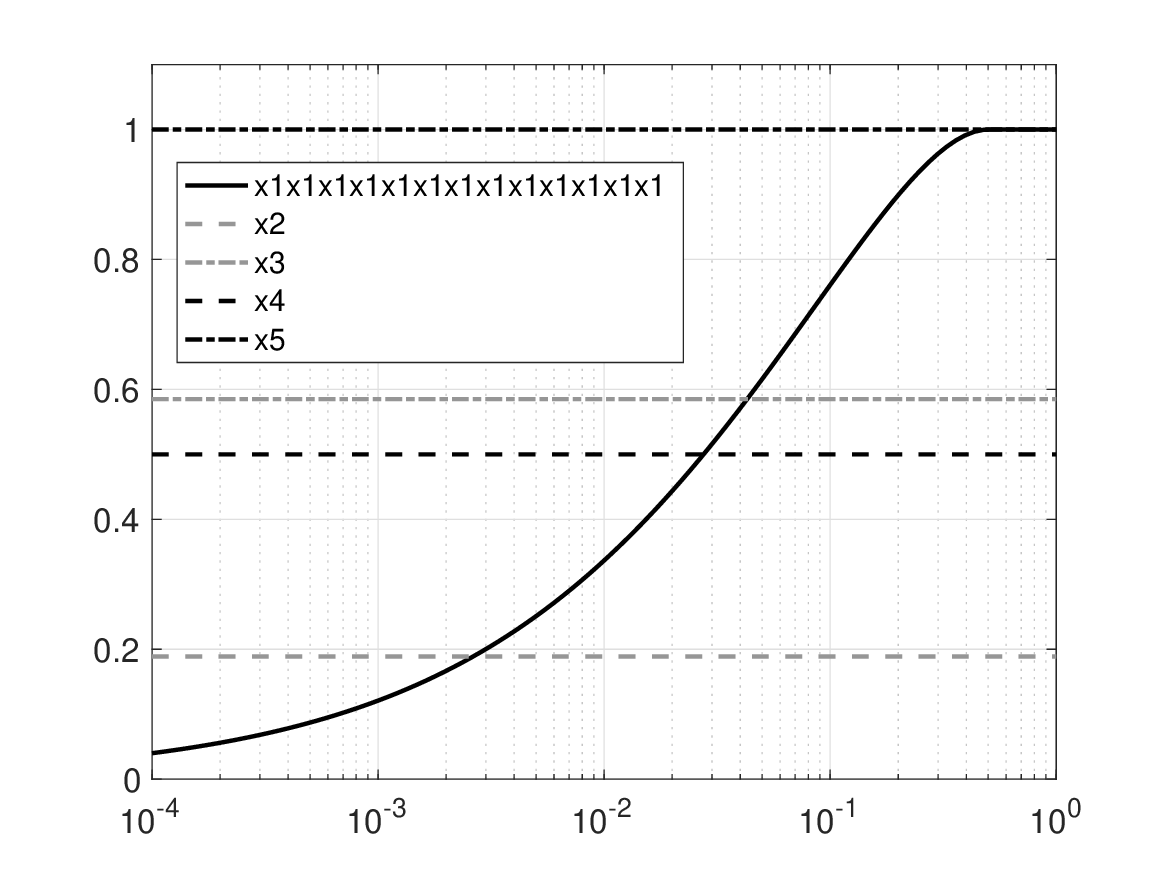}{
        }  
        };
        \node[] at (0,-3.1) {\small Gentleness $\alpha$};
        \node[rotate=90] at (-3.7,0) {\small Information Leakage};
    \end{tikzpicture}
    \vspace{-3mm}
    \caption{
    Information leakage and its lower bound versus weakly gentle measurement parameter $\alpha$. 
    }
    \vspace{-5mm}
    \label{fig:1}
\end{figure}

\section{Numerical Example} \label{sec:numerical}
Consider the BB84 scheme in~\cite{bennett2014quantum}. This is a scheme for quantum key distribution and its security relies on non-cloning theorem and post-measurement state collapse. Consider uniformly distributed random variable $X=(X_1,X_2)\in\{0,1\}^2$. The sender, i.e., Alice in Figure~\ref{fig:ABE}, encodes the random variable $X$ into a two-dimensional quantum system, i.e., a qubit, according to 
\begin{align*}
    \rho^x=
    \begin{cases}
        \ket{0}\bra{0}, & x=(0,0),\\
        \ket{1}\bra{1}, & x=(0,1),\\
        \ket{+}\bra{+}, & x=(1,0),\\
        \ket{-}\bra{-}, & x=(1,1),
    \end{cases}
\end{align*}
where $\{\ket{0},\ket{1}\}$ is the computational basis or the $Z$-basis, and $\{\ket{+},\ket{-}\}$ is the $X$-basis. Here, $X$ and $Z$ refer to Pauli matrices~\cite{nielsen2010quantum}. The qubit, for instance, can model the photon polarization when establishing quantum communication using single photons. Given that it is not possible to clone an arbitrary quantum state, the eavesdropper must measure the state directly or via an ancillary system interacting with it (e.g., based on approximate cloning). For instance, the eavesdropper can toss a coin and, based on the outcome, measure the state either in the $Z$-basis or the $X$-basis. This measurement is denoted by $W_1$ in what follows. Another approach is to always measure in the $X$-basis. This measurement is denoted by $W_2$ in what follows. Both these measurement strategies result in a non-trivial change of the state (as they are projection based), which the sender or the receiver can use to identify the presence of a malicious party and thus abandon communication (i.e., key sharing protocol) with high probability. This was used to establish the safety of the BB84 algorithm. Instead of the proposed measurement strategies, the eavesdropper can use gentle measurements proposed in this paper to extract some information, albeit not the maximum possible amount, while reducing the detection probability. Note that as $\alpha$ and $\delta$ (i.e., the gentleness) vary the amount of extracted information changes but also the probability of detection by the sender or receiver. 

Figure~\ref{fig:1} illustrates information leakage versus weakly gentle measurement parameter $\alpha$. The solid curve illustrates the lower bound $\underline{\mathcal{L}}_\alpha(\{\rho^x\}_{x\in\X})$ in Proposition~\ref{prop:lower_bound}. Other measures of information leakage that do not rely on gentle measurement are also demonstrated for comparison. 
The maximum amount of information that can be extracted, captured by the maximal quantum leakage $\mathcal{Q}(X\rightarrow A)_\rho$ and attained by measurement $W_2$, is 1~bit. As we increase $\alpha\rightarrow 1$, the lower bound gets closer to the maximal quantum leakage. As we reduce $\alpha\rightarrow 0$, the lower bound  reduces. It is interesting to that for a relatively small $\alpha=0.1$ (which reduces the chance of Eve getting caught significantly), the eavesdropper can still steal a significant amount of information $\mathcal{L}_{(0.1,\delta)}(X\rightarrow A)_{\rho}\geq \underline{\mathcal{L}}_{0.1}(\{\rho^x\}_{x\in\X})=0.7608$ for any $\delta\in[0,1]$.

\section{Conclusions and Future Work} \label{sec:conc}
We developed and studied a new notion for quantum information leakage against arbitrary eavesdroppers that aim to avoid detection. Gentle measurements were used to encode the desire of the adversary to evade detection. Future work can focus on proving data-processing inequality for general quantum channels and (sub-)additivity when accessing multiple quantum systems. Also, a numerical algorithm for computing the gentle quantum leakage can be developed.

\bibliography{ref}
\bibliographystyle{ieeetr}

\appendices

\section{Useful Lemma}
\label{app:useful_lemma}
\begin{lemma} \label{lemma:gentle_construct}
    Consider 
    \begin{align*}
        B_+&=\sqrt{\frac{1-2\epsilon^2}{2}}I+\epsilon M,\\
        B_-&=\sqrt{\frac{1-2\epsilon^2}{2}}I-\epsilon M,\\
        B_0&=\sqrt{2}\epsilon(I- M^2)^{1/2},
    \end{align*}
    for semi-positive definite operator $M$. Then $\{F_+,F_-,F_0\}$ with $F_y=B_yB_y^\dag$ with $y\in\{+,-,0\}$ forms a POVM. For any $\alpha>0$, there exists $\epsilon'>0$ such that, for all $0\leq \epsilon\leq \epsilon'$, $\{F_+,F_-,F_0\}$ is $(\alpha,\delta)$-weakly gentle on any $S\subseteq\S(\H)$.
\end{lemma}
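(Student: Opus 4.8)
The plan is to verify the two distinct claims in Lemma~\ref{lemma:gentle_construct} separately: first the algebraic fact that $\{F_+,F_-,F_0\}$ is a POVM, and then the continuity argument showing it is $(\alpha,\delta)$-weakly gentle for small $\epsilon$.

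First I would check that $\sum_{y}F_y=I$ with each $F_y\geq 0$. Since $M$ is semi-positive definite and Hermitian, so is $M^2$, and $B_0=\sqrt{2}\epsilon(I-M^2)^{1/2}$ is well-defined and Hermitian provided $\epsilon$ is small enough that $M^2\leq I$; actually we should note that for the square root to make sense we need $\|M\|\leq 1$, so the statement implicitly requires this (or we rescale $M$, or shrink $\epsilon$ after rescaling). Taking $B_\pm$ Hermitian, we get $F_\pm=B_\pm B_\pm^\dag=B_\pm^2=\frac{1-2\epsilon^2}{2}I\pm 2\sqrt{\tfrac{1-2\epsilon^2}{2}}\,\epsilon M+\epsilon^2M^2$ and $F_0=2\epsilon^2(I-M^2)$. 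Summing the cross terms $\pm 2\sqrt{\cdots}\,\epsilon M$ cancel, and $2\cdot\frac{1-2\epsilon^2}{2}I+2\epsilon^2M^2+2\epsilon^2 I-2\epsilon^2M^2=(1-2\epsilon^2)I+2\epsilon^2 I=I$, so $\sum_y F_y=I$. Positivity of $F_\pm$ and $F_0$ is immediate since they are of the form $B B^\dag$. This handles the POVM claim.

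For the gentleness claim, the key observation is that as $\epsilon\to 0$, the implementation $\{B_+,B_-,B_0\}$ converges to a trivial one: $B_\pm\to\tfrac{1}{\sqrt2}I$ and $B_0\to 0$, so the outcome $0$ has vanishing probability and, conditioned on outcomes $\pm$, the post-measurement state $\rho_{\B|\pm}=\frac{B_\pm\rho B_\pm^\dag}{\trace(B_\pm\rho B_\pm^\dag)}\to\rho$. I would make this quantitative: expand $B_\pm\rho B_\pm^\dag=\frac{1-2\epsilon^2}{2}\rho+O(\epsilon)$ and $\trace(B_\pm\rho B_\pm^\dag)=\frac{1-2\epsilon^2}{2}+O(\epsilon)$, so that $\rho_{\B|\pm}-\rho=O(\epsilon)$ uniformly in $\rho\in\S(\H)$ (the trace-distance diameter of $\S(\H)$ is bounded by $1$, and the $O(\epsilon)$ constants depend only on $\|M\|\leq 1$, not on $\rho$). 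Hence there is $\epsilon'>0$ so that for all $0\leq\epsilon\leq\epsilon'$ and all $\rho\in S$, $\|\rho_{\B|+}-\rho\|_{\trace}\leq\alpha$ and $\|\rho_{\B|-}-\rho\|_{\trace}\leq\alpha$. For the outcome $0$, one can either shrink $\epsilon'$ further so that the bound also holds there, or — cleaner — note $\mathbb{P}\{Y=0\}=\trace(\rho F_0)=2\epsilon^2\trace(\rho(I-M^2))\leq 2\epsilon^2$, which can be made $\leq\delta$ by choosing $\epsilon'$ small; then the event $\{\|\rho_{\B|Y}-\rho\|_{\trace}\leq\alpha,\ \forall\rho\in S\}$ contains $\{Y\in\{+,-\}\}$, whose probability is at least $1-2\epsilon^2\geq 1-\delta$. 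Either way \eqref{eqn:def_probability_gentle} holds.

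The main obstacle is purely one of bookkeeping the uniformity: one must ensure the constant in the $O(\epsilon)$ estimate for $\|\rho_{\B|\pm}-\rho\|_{\trace}$ does not blow up, which requires controlling the denominator $\trace(B_\pm\rho B_\pm^\dag)$ away from $0$ — true for small $\epsilon$ since it is $\tfrac12+O(\epsilon)$ — and using that $\|\rho\|_{\trace}\leq 1$ and $\|M\|\leq 1$ to bound all numerator terms independently of $\rho\in\S(\H)$. Since $\delta$ appears only through the probability-$(1-\delta)$ requirement and $\delta\in[0,1]$, if $\delta=0$ we additionally need the bound to hold for the outcome $0$ as well, which is fine because $\rho_{\B|0}=\frac{(I-M^2)^{1/2}\rho(I-M^2)^{1/2}}{\trace(\rho(I-M^2))}$ is a fixed state (independent of $\epsilon$) and need not be close to $\rho$ — so for $\delta=0$ we instead rely on forcing $\mathbb{P}\{Y=0\}$ to be exactly controlled; but since $2\epsilon^2>0$ for $\epsilon>0$, strictly $\delta=0$ forces $\epsilon=0$, giving the trivial measurement $\{\tfrac12 I,\tfrac12 I,0\}$ which is indeed $(\alpha,0)$-gentle. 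I would state the conclusion for $\delta\in(0,1]$ and note the $\delta=0$ edge case separately, or simply take $\epsilon'$ depending on both $\alpha$ and $\delta$ as the lemma allows.
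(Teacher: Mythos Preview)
Your proposal is correct and follows essentially the same route as the paper: verify $\sum_y F_y=I$ by direct expansion, establish $\|\rho_{\B|\pm}-\rho\|_{\trace}=O(\epsilon)$ via a first-order expansion of $B_\pm\rho B_\pm^\dag$ and its trace, and handle the outcome $0$ by bounding $\mathbb{P}\{Y=0\}\leq 2\epsilon^2\leq\delta$. Your positivity argument ($F_y=B_yB_y^\dag\geq 0$ automatically) is in fact cleaner than the paper's, which unnecessarily imposes $\epsilon\leq 1/10$ to force $F_-\geq 0$, and your explicit attention to uniformity in $\rho$ and the $\delta=0$ edge case are points the paper glosses over.
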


\begin{proof}
    Note that $F_0=B_0B_0^\dag\geq 0$ because $M\leq I$. Also,
    \begin{align*}
    F_-
    =&B_-B_-^\dag
    =\frac{1-2\epsilon^2}{2}I-2\epsilon\sqrt{\frac{1-2\epsilon^2}{2}}M+\epsilon^2M.
    \end{align*}
    Therefore, $F_-\geq 0$ if $\epsilon\leq 1/10$.   
    Furthermore, 
    \begin{align*}
        F_++F_-+F_0
        =&B_+B_+^\top+B_-B_-^\top+B_0B_0^\top\\
        =&\frac{1-2\epsilon^2}{2}I+\sqrt{2(1-2\epsilon^2)}\epsilon M+\epsilon^2 M^2\\
        &+\frac{1-2\epsilon^2}{2}I-\sqrt{2(1-2\epsilon^2)}\epsilon M+\epsilon^2 M^2\\
        &+2\epsilon^2 (I-M^2)\\
        =&I.
    \end{align*}
    Therefore, $\{F_+,F_-,F_0\}$ is a POVM. We have
    \begin{align*}
       B_+\rho B_+^\dag
       =&\left(\sqrt{\frac{1-2\epsilon^2}{2}}I+\epsilon M \right)\rho \left(\sqrt{\frac{1-2\epsilon^2}{2}}I+\epsilon M \right)\\
       =&\frac{1-2\epsilon^2}{2}\rho +\epsilon\sqrt{\frac{1-2\epsilon^2}{2}}(M\rho+\rho M)+\epsilon^2 M\rho M
    \end{align*}
    and
    \begin{align*}
        \trace(B_+\rho B_+^\dag)
        =& \frac{1-2\epsilon^2}{2}+\epsilon\sqrt{2(1-2\epsilon^2)}\trace(\rho M)\\
        &+\epsilon^2 \trace(M\rho M)\\
        =&\frac{1-2\epsilon^2}{2}+\epsilon\sqrt{2(1-2\epsilon^2)}\trace(\rho M)+\mathcal{O}(\epsilon^2).
    \end{align*}
    As a result,
    \begin{align*}
        \frac{B_+\rho B_+^\dag}{\trace(B_+\rho B_+^\dag)}
        =&\frac{\frac{1-2\epsilon^2}{2}\rho +\epsilon\sqrt{\frac{1-2\epsilon^2}{2}}(M\rho+\rho M)+\mathcal{O}(\epsilon^2)}{\frac{1-2\epsilon^2}{2}+\epsilon\sqrt{2(1-2\epsilon^2)}\trace(\rho M)+\mathcal{O}(\epsilon^2)}\\
        =&\rho+\sqrt{\frac{2}{1-2\epsilon^2}}\epsilon(M\rho+\rho M+2\trace(\rho M)\rho)\\&
        +\mathcal{O}(\epsilon^2).
    \end{align*}
    This shows that
    \begin{align*}
        &\left\|\frac{B_+\rho B_+^\dag}{\trace(B_+\rho B_+^\dag)}-\rho \right\|_{\trace}
        \\&\quad=\epsilon\sqrt{\frac{2}{1-2\epsilon^2}}
        \|M\rho+\rho M+2\trace(\rho M)\rho\|_{\trace}+\mathcal{O}(\epsilon^2).
    \end{align*}
    Therefore, there exists small non-zero $\epsilon_+$ such that, for all $0\leq \epsilon\leq \epsilon_+$, 
    \begin{align*}
        \left\|\frac{B_+\rho B_+^\dag}{\trace(B_+\rho B_+^\dag)}-\rho \right\|_{\trace}\leq \alpha.
    \end{align*}
    Following the same line of reasoning, we can demonstrate that there exists small non-zero $\epsilon_-$ such that, for all $0\leq \epsilon\leq \epsilon_-$, 
    \begin{align*}
        \left\|\frac{B_-\rho B_-^\dag}{\trace(B_-\rho B_-^\dag)}-\rho \right\|_{\trace}\leq \alpha.
    \end{align*}
    Finally, note that, if $\epsilon\leq \sqrt{\delta/(2(1-\trace(M^2\rho))}$, we get
    \begin{align*}
        \trace(B_+\rho B_+^\dag)
        +\trace(B_-\rho B_-^\dag)
        =&1-2\epsilon^2 (1-\trace(M^2\rho))\\
        \geq& 1-\delta.
    \end{align*}
    This proves the lemma with $$\epsilon'=\min(\epsilon_+,\epsilon_-,\sqrt{\delta/(2(1-\trace(M^2\rho))},1/10).
    $$
\end{proof}

\end{document}